\def\copyright@on{F}
\newcommand{\myparagraph}[1]{\medskip \noindent\textbf{#1}}
\newcommand{\sv}[1]{}
\newcommand{\lv}[1]{#1}
\newif\iflong
\newif\ifshort
\title{The Parameterized Complexity of Clustering Incomplete Data}
\author{%
  Eduard Eiben,\textsuperscript{\rm 1}
  Robert Ganian,\textsuperscript{\rm 2}
  Iyad Kanj,\textsuperscript{\rm 3}
  Sebastian Ordyniak,\textsuperscript{\rm 4}
  Stefan Szeider\textsuperscript{\rm 2}\\}
\newcommand{\DIAMq}{\DIAM}
\newcommand{\MDIAMq}{\DIAM_{\max}}
\newtheorem{theorem}{Theorem}
\newtheorem*{theorem*}{Theorem}
\newtheorem{corollary}[theorem]{Corollary}
\crefname{corollary}{corollary}{corollaries}
\newtheorem{lemma}[theorem]{Lemma}
\newtheorem{observation}[theorem]{Observation}
\crefname{lemma}{lemma}{lemmas}
\crefname{definition}{definition}{definitions}
\newcommand{\SB}{\{\,}
\newcommand{\SM}{\;{|}\;}
\newcommand{\SE}{\,\}}
\newcommand{\PPP}{\mathcal{P}}
\newcommand{\CCC}{\mathcal{C}}
\newcommand{\III}{\mathcal{I}}
\newcommand{\FFF}{\mathcal{F}}
\newcommand{\Nat}{\mathbb{N}}
\newcommand{\cc}[1]{{\mbox{\textnormal{\textsf{#1}}}}\xspace}  
\newcommand{\NP}{\cc{NP}}
\newcommand{\FPT}{\cc{FPT}}
\newcommand{\XP}{\cc{XP}}
\newcommand{\Weft}{{\cc{W}}}
\newcommand{\W}[1]{{\Weft}{{\normalfont{[#1]}}}}
\newcommand{\paraNP}{\cc{paraNP}}
\newcommand{\hy}{\hbox{-}\nobreak\hskip0pt}
\newcommand{\nn}{\mathbb{N}}
\newcommand{\bigoh}{\mathcal{O}}
\newcommand{\probfont}[1]{\textnormal{\textsc{#1}}}
\newcommand{\INCLUS}{\probfont{In-Clustering}}
\newcommand{\ANYCLUS}{\probfont{Any-Clustering}}
\newcommand{\PAIRCLUS}{\probfont{Diam-Clustering}}
\newcommand{\CLUS}[1]{\probfont{#1-Clustering}}
\newcommand{\IADCLUS}{\probfont{In/Any/Diam-Clustering}}
\newcommand{\blank}{{\small\square}}
\newcommand{\INCLUSq}{\probfont{In-Clustering-Completion}}
\newcommand{\ANYCLUSq}{\probfont{Any-Clustering-Completion}}
\newcommand{\PAIRCLUSq}{\probfont{Diam-Clustering-Completion}}
\newcommand{\CLUSq}[1]{\probfont{#1-Clustering-Completion}}
\newcommand{\IADCLUSq}{\probfont{In/Any/Diam-Clustering-Completion}}
\newcommand{\TCLUSq}[1]{\probfont{#1-Clustering-C}}
\newcommand{\comb}{\textnormal{\texttt{cover}}}
\newcommand{\yes}{\textsc{Yes}}
\def\ie{{\em i.e.}}
\def\eg{{\em e.g.}}
\def\etal{{\em et al.}}
\newcommand{\pbDef}[3]{%
\noindent
\begin{center}
\begin{boxedminipage}{0.98 \columnwidth}
#1\\[5pt]
\begin{tabular}{l p{0.73 \columnwidth}}
Input: & #2\\
Question: & #3
\end{tabular}
\end{boxedminipage}
\end{center}
}
\newcommand{\NO}{\textsc{No}}
\newcommand{\YES}{\textsc{Yes}}
\newcommand{\HSET}{\Delta}
\newcommand{\HVEC}{\HSET^{-1}}
\newcommand{\HDIST}{\delta}
\newcommand{\DIAM}{\gamma}
\newcommand{\DCOOR}{Z}
\newcommand{\HN}[2]{N_{#2}({#1})}
\newcommand{\inst}{\mathcal{I}}
\newcommand{\compG}{G}
\newcommand{\HDISTq}{\HDIST}
\begin{document}

\maketitle

\pagestyle{plain}

\begin{abstract}
  We study fundamental clustering problems for incomplete
  data. Specifically, given a set of incomplete $d$-dimensional
  vectors (representing rows of a matrix), the goal is to complete the
  missing vector entries in a way that admits a partitioning of
  the vectors into at most $k$ clusters with radius or diameter at
  most $r$.  We give tight characterizations of the parameterized
  complexity of these problems with respect to the parameters $k$,
  $r$, and the minimum number of rows and columns needed to cover all
  the missing entries. We show that the considered problems are
  fixed-parameter tractable when parameterized by the three parameters
  combined, and that dropping any of the three parameters results in
  parameterized intractability.  A byproduct of our results is that,
  for the complete data setting, all problems under consideration are
  fixed-parameter tractable parameterized by $k+r$.
\end{abstract}

\section{Introduction}
\label{sec:intro}

We study fundamental clustering problems for incomplete data. In this
setting, we are given a set of $d$-dimensional Boolean vectors
(regarded as rows of a matrix), some of whose entries
might be missing. The objective is to complete the missing entries
in order to enable a ``clustering'' of the $d$-dimensional
vectors such that  elements in the same cluster are ``similar.''

There is a wealth of research on
data completion problems~\cite{cp10,cr09,ct10,ev13,icml,hmrw14} due to
their ubiquitous applications in recommender systems, machine
learning, sensing, computer vision, data science, and predictive
analytics, among others. In these areas, data completion problems
naturally arise after observing a sample from the set of vectors, and
attempting to recover the missing entries with the goal of
optimizing certain criteria. Some of these criteria include
minimizing the number of clusters into which the completed vectors can be partitioned, or forming a large cluster, where the definition of what constitutes a cluster varies from one application to another~\cite{balzano,nips2016-incompleteclustering,ev13,icdm}.
\sv{
Needless to say, the clustering problem itself (\ie, for complete
data) is a fundamental problem whose applications
span several areas of computing, including data mining, machine
learning, pattern recognition, and recommender systems~\cite{clusteringbook1,clusteringbook2,clusteringbook4,clusteringbook3}.
}

\lv{
Needless to say, the clustering problem itself (\ie, for complete
data) is a fundamental problem whose applications
span several areas of computing, including data mining, machine
learning, pattern recognition, and recommender systems;
there are several recent
books~\cite{clusteringbook1,clusteringbook2,clusteringbook4,clusteringbook3}
that provide an introduction to clustering and its applications. Clustering is the focus of extensive research in the Machine Learning and Neural Information Processing communities, with numerous papers studying application-focused~\cite{BetancourtZMWZS16,HuCNCG18,0001SC18} as well as purely theoretical~\cite{HarrisLSTP18,AddadKM18,Ryabko17,YunP16} aspects of clustering.
The clustering problem is formulated by representing each element in the given set as a $d$-dimensional vector
 each of whose coordinates corresponds to a feature/characteristic, and the value of the vector at that coordinate reflects the score of the element with respect to that characteristic.}

In many cases, the goal of clustering is to optimize the number of clusters and/or the degree of similarity within a cluster (\emph{intra-cluster similarity}).
To measure the intra-cluster similarity, apart from using an aggregate
measure (\eg, the variance in $k$-means clustering), two measures that
have been studied use the \emph{radius} (maximum distance to a selected ``center'' vector)
and \emph{diameter} (maximum distance between any two cluster-vectors)
of the
cluster~\cite{charikar,frieze,normclustering,lingashammingcenter,lingasapxclustering,gonzalez,GrammNiedermeierRossmanith03}. The
radius is computed either with respect to a vector in the
cluster itself 
or an arbitrary
$d$-dimensional vector~\cite{clusteringbook4}.

Regardless of which of the above measures of intra-cluster similarity is used,
the vast majority of the clustering problems that arise are \NP-hard.
Consequently, heuristics are often used to cope with the hardness of
clustering problems, trading in a suboptimal clustering for polynomial
running time. In this paper we take a different approach: we maintain the optimality of the obtained clustering by
relaxing the notion of tractability from polynomial-time to
\emph{fixed-parameter tractability} (\FPT{})~\cite{CyganFKLMPPS15,DowneyFellows13,GottlobSzeider08}, where the running time is
polynomial in the instance size but may involve a super-polynomial factor that depends only on some \emph{problem parameter}, which is assumed to be small for certain instances of interest.
 In the context of clustering,
two natural parameters that are desirable to be small
are upper bounds on the number of clusters and the
radius/diameter.
\lv{
Such clusterings are suitable for many applications, as one would like the similarity level within each cluster to be high and the number of clusters not to be very large.
}

\smallskip

\noindent \textbf{Contributions.}
Motivated by the above, we consider several fundamental clustering
problems in the incomplete data setting. Namely, we consider the following three problems, referred to as \INCLUSq, \ANYCLUSq, and \PAIRCLUSq, that share a similar setting:
In all three problems, the input is a (multi)set
$M$ of $d$-dimensional vectors over the Boolean domain\footnote{We view $M$ as the (multi)set of rows of a Boolean matrix.}, some of whose entries might be
missing, and two parameters $r, k \in \nn$\sv{; we use the symbol $\blank$ to denote a missing vector entry}.
\sv{For \INCLUSq{}, the goal is decide whether the $\blank$ entries in $M$
can be completed to obtain a (complete) set of vectors $M^*$ such that
there is a subset $S \subseteq M^*$ with $|S|\leq k$ satisfying that,
for every $\vec{a} \in M^*$, the Hamming distance between $\vec{a}$
and some vector in $S$ is at most $r$. That is, the goal for
\INCLUSq{}
is to complete the missing entries so as to enable a partitioning
of the resulting (complete) set into at most $k$ clusters such that all vectors in the
same cluster are within Hamming distance at most $r$ from some ``center''
vector that belongs to the cluster itself. For \ANYCLUSq{}, the goal is
the same as that for \INCLUSq{}, except that the center vectors need
not be in the set $M$ (\ie, are chosen from $\{0, 1\}^d$).  For
\PAIRCLUSq{}, the goal is to complete the missing entries in $M$ so as to obtain a set $M^*$ such that the vectors
of $M^*$ can be partitioned into at most~$k$ clusters/subsets, each of
diameter at most~$r$ (where the diameter of a set of vectors $S$ is
the maximum pairwise Hamming distance over all pairs of vectors in the
set). We denote by \INCLUS, \ANYCLUS, and \PAIRCLUS{}
the complete versions of \INCLUSq, \ANYCLUSq, and \PAIRCLUSq,
respectively; that is, the restrictions of the aforementioned data
completion problems to input instances in which the set of vectors
contains no missing entries. }

\lv{
 For \INCLUSq{}, the goal
is to complete the missing entries so as to enable a partitioning
of the set $M$ into at most $k$ clusters such that all vectors in the
same cluster are within distance at most $r$ from some ``center''
vector that belongs to the cluster itself. The goal for \ANYCLUSq{} is
the same as that for \INCLUSq{}, except that the center vectors need
not be in the set $M$ (\ie, are chosen from $\{0, 1\}^d$). For
\PAIRCLUSq{}, the goal is to complete the missing entries so as to
enable a partitioning of $M$ into at most~$k$ clusters such that the
diameter of each cluster is at most~$r$. The formal problem definitions are given
in Section~\ref{section:prelims}.
}

Our first order of business is to obtain a detailed map of the parameterized complexity of
the above three data completion problems. As we show in this paper, parameterization by $k+r$ is not sufficient to achieve tractability for any of these three problems: one needs to restrict the occurrences
of the unknown entries in some way as well. We do so by adopting a
third parameter defined as the minimum number of vectors and
coordinates (or, equivalently, rows and columns in a matrix representation of~$M$)
needed to cover all the missing entries. This parameter, which we call the \emph{covering number} or simply $\comb$, is guaranteed to be small when the unknown entries arise from the addition of a small number of new rows and columns (\eg, new users and attributes) into a known data-set; in particular, the parameter may be small even in instances with a large number of rows and columns that contain missing entries. The covering number has previously been used in the context of matrix completion~\cite{icml} and is in fact the least restrictive parameter considered in that  paper.

Our main contribution is a complete parameterized complexity landscape
for the complete and incomplete versions of all three clustering
problems w.r.t.\ all combinations of the parameters $k$, $r$, and
$\comb$. Our main algorithmic contribution shows that the
incomplete variants of all three clustering problems
are fixed-parameter tractable parameterized by
$k+r+\comb$, and as a consequence the complete variants are fixed-parameter tractable
parameterized by $k+r$. Notably, our tractability results are obtained
using
\emph{kernelization}~\cite{FominLokshtanovSaurabhZehavi19,GaspersSzeider14}
and therefore provide efficient
polynomial-time preprocessing procedures, which can be applied before
the application of any available (even heuristic) clustering
algorithm. To perform the kernelization, we apply a two-step approach: first we build on the well-known Sunflower Lemma~\cite{Erdos60} to develop new tools that allow us to reduce the number of rows in the target instance, and after that we use entirely different techniques to identify a small set of ``distance-preserving'' relevant coordinates.
Together with a set of algorithmic lower
bound results (and an \XP{} algorithm for \INCLUS{} parameterized by
$k$), this provides the comprehensive parameterized complexity
landscape illustrated in Table~\ref{tab:results-combined}.
We also show that all our tractability results can be lifted from the
Boolean domain to any finite domain, for the Hamming
as well as Manhattan distance.

\begin{table*}[htbp]
  \begin{center}
    \begin{tabular}{@{}c@{\qquad\qquad~}c@{\qquad}c@{\qquad}c@{\qquad}c@{}}\toprule
      Parameter: & $k$ & $r$ & $k+r$ & $k+r+\comb$ \\ \midrule
      \INCLUS & \W{2}\hy c & \paraNP{}\hy c &  \FPT{} & N/A \\
      \CLUS{Any/DIAM} & \paraNP{}\hy c & \paraNP{}\hy c &  \FPT{} & N/A
      \\
\midrule
      \TCLUSq{In/Any/DIAM} & \paraNP{}\hy c
                    & \paraNP{}\hy c & \paraNP{}\hy c & \FPT{} \\
     \bottomrule
    \end{tabular}
  \end{center}

  \caption{Parameterized complexity results for exact clustering with
    complete data (top) and incomplete data (bottom). \FPT{} means
    fixed-parameter tractability, while \paraNP{}\hy c and \W{2}\hy c mean
    completeness for these complexity classes and indicate
    fixed-parameter \underline{in}tractability\lv{(see
    Section~\ref{section:prelims})}.}
  \label{tab:results-combined}
\end{table*}

\smallskip
\noindent \textbf{Related Work.}
In previous work, Hermelin and Rozenberg~(\citeyear{hermelin}) studied the {\sc Closest String with Wildcards} problem, which corresponds to \ANYCLUSq{} with $k=1$. Independently of our work, Koana et al.~(\citeyear{niedermeierradius}) very recently revisited the earlier work of Hermelin and Rozenberg~(\citeyear{hermelin}) and obtained, among other results, a fixed-parameter algorithm for that problem parameterized by $r$ plus the maximum number of missing entries per row; in that same paper, they also studied \INCLUSq\ with $k=1$.
Even more recently, the same group~\cite{niedermeierdiameter} also studied a problem related to \PAIRCLUSq\ for a single cluster, \ie, for $k=1$. They obtain a classification orthogonal to ours w.r.t.\ constant lower and upper bounds on the diameter and the maximum number of missing entries per row.

The main differences between the problems studied by Koana et al.~(\citeyear{niedermeierradius,niedermeierdiameter}) and the restrictions of \ANYCLUSq{} and \INCLUSq{} (studied in this paper) to $k=1$ (\ie, the restriction to the special case where we seek precisely 1 cluster) is the parameter used to capture the number of missing entries per row. Indeed, the authors of these works consider the maximum number of missing entries (over all rows), whereas we consider the parameter $\comb$. The two parameters are orthogonal:
there are instances in which the maximum number of missing entries per row is very small yet $\comb$ is large, and vice versa.

The parameterized complexity of a related problem---\textsc{Matrix Completion}---has been studied in a different context than that of clustering~\cite{icml}; the problem considered therein corresponds to the special case of \INCLUSq{} in which the clustering radius $r$ is $0$.
There is also an extensive body of research on clustering problems for complete data. Examples include the work of Frances and Litman~(\citeyear{litman}), Gramm, Niedermeier and Rossmanith~(\citeyear{GrammNiedermeierRossmanith03}), as well as many other works~\cite{boucherma,cabello,abs-1807-07156,FominGP18,FominGS19,lingashammingcenter,lingasapxclustering,gonzalez}.
Note also that \INCLUS{} and \ANYCLUS{} are special instances of the well-known $k$-center problem.

We remark that related problems have also been studied by a variety of other authors, such as, e.g., Chen, Hermelin, Sorge~(\citeyear{ChenHS19}).

\lv{
\smallskip
\noindent \textbf{Paper Organization.} The paper is structured as
follows. After introducing the relevant preliminaries in
Section~\ref{section:prelims} we introduces the tools that lie at the
core of our approach in
Section~\ref{sec:structural}. Section~\ref{sec:fpt-incomplete} then
employs these tools to obtain \FPT-algorithms for
\IADCLUSq. Section~\ref{sec:hardness} is where we present all the
lower bounds required to obtain the complexity classification
presented in Table~\ref{tab:results-combined}. The final two sections
deal with general implications of our results: Section~\ref{sec:boundeddomain} shows how our results generalize to matrices (vectors) over any bounded domain, while Section~\ref{sec:implications} discusses implications for graph problems. We conclude the paper in Section~\ref{sec:concl} with some remarks and open questions.
}

\sv{
\section{Preliminaries}\label{section:prelims}

A {\it parameterized problem} $Q$ is a subset of $\Omega^* \times
\mathbb{N}$, where $\Omega$ is a fixed alphabet. Each instance of $Q$ is a pair $(I, \kappa)$, where $\kappa \in \Nat$ is called the {\it
parameter}. A parameterized problem $Q$ is
{\it fixed-parameter tractable} if there is an
algorithm, called an {\em \FPT-algorithm},  that decides whether an input $(I, \kappa)$
is a member of $Q$ in time $f(\kappa) \cdot |I|^{\bigoh(1)}$, where $f$ is a computable function and $|I|$ is the input instance size.  The class \FPT{} denotes the class of all fixed-parameter
tractable parameterized problems. A parameterized problem is {\em kernelizable}
if there exists a polynomial-time reduction that maps an instance $(I, \kappa)$ of
the problem to another instance $(I', \kappa')$ such that (1) $|I'| \leq f(\kappa)$ and $\kappa' \leq f(\kappa)$, where $f$ is a computable function, and (2) $(I,\kappa)$ is a \yes-instance
of the problem if and only if $(I',\kappa')$ is. The instance
$(I',\kappa')$ is called the {\em kernel} of~$I$. It is well known that a
decidable problem is \FPT{} if and only if it is
kernelizable. A hierarchy, the \cc{W}-hierarchy, of parameterized complexity has been defined, and the notions of hardness and completeness have been introduced for each level
\W{$i$} of the \cc{W}-hierarchy for $i \geq 1$. It is commonly believed that $\W{2}\supset \W{1} \supset \FPT$, and the
notion of \W{1}-hardness has served as the main working hypothesis of fixed-parameter
intractability.
An even stronger notion of intractability is that of \paraNP-hardness, which contains all parameterized problems which remain \NP-hard even if the parameter is fixed to a constant.
We refer readers to the relevant literature
\cite{FlumGrohe06,DowneyFellows13,CyganFKLMPPS15} for more information.

Let $\vec{a}$ and $\vec{b}$ be two binary vectors.
We denote
by $\HSET(\vec{a},\vec{b})$ the set of coordinates in which
$\vec{a}$ and $\vec{b}$ are guaranteed to differ, \ie,
$\HSET(\vec{a},\vec{b})=\SB i
\SM (\vec{a}[i]=1 \wedge \vec{b}[i]=0)\vee (\vec{a}[i]=0 \wedge \vec{b}[i]=1) \SE$, and we denote by
$\HDIST(\vec{a},\vec{b})$ the \emph{Hamming distance} between
$\vec{a}$ and $\vec{b}$ measured only between known entries, \ie,
$|\HSET(\vec{a},\vec{b})|$.
We denote by $\HSET(\vec{a})$ the set
$\HSET(\vec{0},\vec{a})$, and for a set $C$ of coordinates, we denote
by $\HVEC(C)$ the vector that is $1$ at precisely the coordinates in $C$ and $0$ at all
other coordinates. We extend this notation to sets of
vectors and a family of coordinate sets, respectively. For a set $N$ of
vectors in $\{0,1\}^d$ and a family ${\cal C}$ of coordinate sets, we denote by $\HSET(N)$ the set $\SB\HSET(\vec{v}) \SM \vec{v} \in N\SE$ and by $\HVEC(\CCC)$ the set $\SB
\HVEC(C) \SM C \in \CCC\SE$. We say that a vector $\vec{a}\in \{0,1\}^d$
is a \emph{$t$-vector} if $|\HSET(\vec{a})|=t$ and
we say that $\vec{a}$ \emph{contains} a subset $S$ of coordinates if $S \subseteq \HSET(\vec{a})$.
For a subset $S \subseteq \{0,1\}^d$ and a vector $\vec{a} \in \{0,1\}^d$,
we denote by $\HDIST(S,\vec{a})$ the minimum Hamming distance between
$\vec{a}$ and the vectors in $S$, \ie,
$\HDIST(S,\vec{a})=\min_{\vec{s} \in
  S}\HDIST(\vec{s},\vec{a})$.
We denote by $\DIAM(S)$ the
\emph{diameter} of $S$, \ie, $\DIAM(S):=\max_{\vec{s_1},\vec{s_2} \in
	S}\HDIST(\vec{s_1},\vec{s_2})$.

Let $M\subseteq \{0,1\}^d$ and let $[d]=\{1,\dots,d\}$. For a vector $\vec{a} \in M$, we
denote by $\HN{\vec{a}}{r}$ the
\emph{$r$-Hamming neighborhood} of $\vec{a}$, \ie, the set $\SB
\vec{b} \in M \SM \HDIST(\vec{a},\vec{b})\leq r\SE$ and by $\HN{M}{r}$ the
set $\bigcup_{\vec{a} \in M}\HN{\vec{a}}{r}$.
Similarly, we denote by $\HN{\vec{a}}{=r}$ the
the set $\SB \vec{b} \in M \SM \HDIST(\vec{a},\vec{b}) = r\SE$ and by $\HN{M}{=r}$ the
set $\bigcup_{\vec{a} \in M}\HN{\vec{a}}{=r}$.
We say that $M^*\subseteq \{0,1\}^d$ is a \emph{completion} of $M\subseteq \{0,1,\blank\}^d$ if there is a bijection $\alpha: M \rightarrow M^*$ such that for all $\vec{a}\in M$ and all $i\in [d]$ it holds that either $\vec{a}[i]=\blank$ or $\alpha(\vec{a})[i]=\vec{a}[i]$.

Let $\{\vec{v}_1,\ldots, \vec{v}_n\}$ be an arbitrary but fixed ordering of a subset $M$ of $\{0,1,\blank\}^d$. If $\vec{v}_i[j]=\blank$, we say that $\blank$ at $\vec{v}_i[j]$ is \emph{covered} by row $i$ and column $j$. The \emph{covering number} of $M$, denoted as $\comb(M)$ or simply $\comb$, is the minimum value of $r+c$ such that there exist $r$ rows and $c$ columns in $M$ with the property that each occurrence of $\blank$ is covered by one of these rows or columns. We will generally assume that for a set $M\in \{0,1,\blank\}^d$ we have computed sets $T_M$ and $R_M$ such that $\comb(M)=|T_M|+|R_M|$ and each $\blank$ occurring in a vector $\vec{v}\in M$ is covered by a row in $R_M$ or a column in $T_M$; we note that this computation can be done in polynomial time~\cite[Proposition 1]{icml}, and in our algorithms parameterized by $\comb(M)$, we will generally assume that $T_M$ and $R_M$ have already been pre-computed.

It will sometimes be useful to argue using the \emph{compatibility graph} $\compG$ associated with an instance $\inst$ of \IADCLUSq.
This is the graph whose vertex set is $M$ and which has an edge between two
vectors $\vec{a}$ and $\vec{b}$ if and only if: $\HDISTq(\vec{a},\vec{b})\leq r$ (for
	the \textsc{In-} and \textsc{Diam-} variants) or $\HDISTq(\vec{a},\vec{b})\leq 2r$ (for the \textsc{Any-} variant).
Notice that vectors in different connected components of $\compG$ cannot interact
with each other: every cluster containing vectors from one
connected component cannot contain a vector from any other connected
component.
}

\lv{
  \section{Preliminaries}\label{section:prelims}

Let $\vec{a}$ and $\vec{b}$ be two vectors in $\{0,1,\blank\}^d$, where $\blank$ is used to represent coordinates whose value is unknown (\ie, missing entries). We denote
by $\HSET(\vec{a},\vec{b})$ the set of coordinates in which
$\vec{a}$ and $\vec{b}$ are guaranteed to differ, \ie,
$\HSET(\vec{a},\vec{b})=\SB i
\SM (\vec{a}[i]=1 \wedge \vec{b}[i]=0)\vee (\vec{a}[i]=0 \wedge \vec{b}[i]=1) \SE$, and we denote by
$\HDIST(\vec{a},\vec{b})$ the \emph{Hamming distance} between
$\vec{a}$ and $\vec{b}$ measured only between known entries, \ie,
$|\HSET(\vec{a},\vec{b})|$. Moreover, for a subset $D' \subseteq [d]$
of coordinates, we denote by $\vec{a}[D']$ the vector $\vec{a}$
restricted to the coordinates in $D'$.

There is a one-to-one correspondence between vectors in
$\{0,1\}^d$ and
subsets of coordinates, \ie, for every vector, we can associate the unique
subset of coordinates containing all its one-coordinates and
vice-versa.
We introduce the following notation for vectors to switch between their set-representation
and vector-representation. We denote by $\HSET(\vec{a})$ the set
$\HSET(\vec{0},\vec{a})$, and for a set $C$ of coordinates, we denote
by $\HVEC(C)$ the vector that is $1$ at precisely the coordinates in $C$ and $0$ at all
other coordinates. We extend this notation to sets of
vectors and a family of coordinate sets, respectively. For a set $N$ of
vectors in $\{0,1\}^d$ and a family ${\cal C}$ of coordinate sets, we denote by $\HSET(N)$ the set $\SB\HSET(\vec{v}) \SM \vec{v} \in N\SE$ and by $\HVEC(\CCC)$ the set $\SB
\HVEC(C) \SM C \in \CCC\SE$. We say that a vector $\vec{a}\in \{0,1\}^d$
is a \emph{$t$-vector} if $|\HSET(\vec{a})|=t$ and
we say that $\vec{a}$ \emph{contains} a subset $S$ of coordinates if $S \subseteq \HSET(\vec{a})$.
For a subset $S \subseteq \{0,1\}^d$ and a vector $\vec{a} \in \{0,1\}^d$,
we denote by $\HDIST(S,\vec{a})$ the minimum Hamming distance between
$\vec{a}$ and the vectors in $S$, \ie,
$\HDIST(S,\vec{a})=\min_{\vec{s} \in
  S}\HDIST(\vec{s},\vec{a})$.
We denote by $\DIAM(S)$ the
\emph{diameter} of $S$, \ie, $\DIAM(S):=\max_{\vec{s_1},\vec{s_2} \in
	S}\HDIST(\vec{s_1},\vec{s_2})$.

Let $M\subseteq \{0,1\}^d$ and let $[d]=\{1,\dots,d\}$. For a vector $\vec{a} \in M$, we
denote by $\HN{\vec{a}}{r}$ the
\emph{$r$-Hamming neighborhood} of $\vec{a}$, \ie, the set $\SB
\vec{b} \in M \SM \HDIST(\vec{a},\vec{b})\leq r\SE$ and by $\HN{M}{r}$ the
set $\bigcup_{\vec{a} \in M}\HN{\vec{a}}{r}$.
Similarly, we denote by $\HN{\vec{a}}{=r}$ the
the set $\SB \vec{b} \in M \SM \HDIST(\vec{a},\vec{b}) = r\SE$ and by $\HN{M}{=r}$ the
set $\bigcup_{\vec{a} \in M}\HN{\vec{a}}{=r}$.
We say that $M^*\subseteq \{0,1\}^d$ is a \emph{completion} of $M\subseteq \{0,1,\blank\}^d$ if there is a bijection $\alpha: M \rightarrow M^*$ such that for all $\vec{a}\in M$ and all $i\in [d]$ it holds that either $\vec{a}[i]=\blank$ or $\alpha(\vec{a})[i]=\vec{a}[i]$.

We now proceed to give the formal definitions of the problems under consideration.

\pbDef{\INCLUSq{}}{A subset $M$ of $\{0,1,\blank\}^d$ and $k, r \in \Nat$.}{Is there a completion $M^*$ of $M$ and a subset $S \subseteq M^*$ with $|S|\leq k$ such that $\HDIST(S,\vec{a})\leq r$ for every $\vec{a} \in M^*$?}
 
\pbDef{\ANYCLUSq{}}{A subset $M$ of $\{0,1,\blank\}^d$ and $k, r \in \Nat$.}{Is there a completion $M^*$ of $M$ and a subset $S \subseteq \{0,1\}^d$ with $|S|\leq k$ such that $\HDIST(S,\vec{a})\leq r$ for every $\vec{a} \in M^*$?}
 
\pbDef{\PAIRCLUSq{}}{A subset $M$ of $\{0,1,\blank\}^d$ and $k, r \in \Nat$.}{Is there a completion $M^*$ of $M$ and a partition $\PPP$ of $M^*$ with $|\PPP|\leq k$
	such that $\DIAM(P)\leq r$ for every $P \in \PPP$?}

Observe that in a matrix representation of the above problems, we can represent the input matrix as a \emph{set} of vectors where each row of the matrix corresponds to one element in our set. Of course, to precisely capture the input it seems more appropriate to consider multisets of vectors---however this
is not an issue here, since even if the initial matrix contained multiple copies of a row, removing any such row from the instance will not change the outcome for any of the three clustering problems considered above. Note that this is not the case for the remaining problems considered in this paper: there it will be important to keep track of repeated rows, and so we will correctly treat $M$ as a multiset. In the few cases where we perform a union of two multisets, we will assume it to be a disjoint union.


We remark that even though the statements are given in the form of
decision problems, all tractability results presented in this paper are constructive and the associated algorithms can also output a solution (when it exists) as a witness, along with the decision.
In the case where we restrict the input to vectors over $\{0,1\}^d$ (\ie, where all entries are known), we omit ``-\textsc{Completion}'' from the problem name.

In some of the proofs, it will sometimes be useful to argue using the \emph{compatibility graph} associated with an instance $\inst$.
This graph, denoted by $\compG(\inst)$, is the undirected
graph that has a vertex for every vector in $M$, and an edge between two
vectors $\vec{a}$ and $\vec{b}$ if and only if:
\begin{itemize}
	\item $\HDISTq(\vec{a},\vec{b})\leq r$ (if $\inst$ is an instance of
	\INCLUS{} or \PAIRCLUS{}), or
	\item $\HDISTq(\vec{a},\vec{b})\leq 2r$ (if $\inst$ is an instance of
	\ANYCLUS{}).
\end{itemize}

We observe that vectors in different connected components of $\compG$ cannot interact
with each other at all: every cluster containing vectors from one
connected component cannot contain a vector from any other connected
component.

\smallskip
\noindent \textbf{Parameterized Complexity.}
In parameterized complexity~\cite{FlumGrohe06,DowneyFellows13,CyganFKLMPPS15},
the complexity of a problem is studied not only with respect to the input size, but also with respect to some problem parameter(s). The core idea behind parameterized complexity is that the combinatorial explosion resulting from the \NP-hardness of a problem can sometimes be confined to certain structural parameters that are small in practical settings. We now proceed to the formal definitions.

A {\it parameterized problem} $Q$ is a subset of $\Omega^* \times
\mathbb{N}$, where $\Omega$ is a fixed alphabet. Each instance of $Q$ is a pair $(I, \kappa)$, where $\kappa \in \Nat$ is called the {\it
parameter}. A parameterized problem $Q$ is
{\it fixed-parameter tractable} (\FPT)~\cite{FlumGrohe06,DowneyFellows13,CyganFKLMPPS15}, if there is an
algorithm, called an {\em \FPT-algorithm},  that decides whether an input $(I, \kappa)$
is a member of $Q$ in time $f(\kappa) \cdot |I|^{\bigoh(1)}$, where $f$ is a computable function and $|I|$ is the input instance size.  The class \FPT{} denotes the class of all fixed-parameter
tractable parameterized problems.

A parameterized problem $Q$
is {\it \FPT-reducible} to a parameterized problem $Q'$ if there is
an algorithm, called an \emph{\FPT-reduction}, that transforms each instance $(I, \kappa)$ of $Q$
into an instance $(I', \kappa')$ of
$Q'$ in time $f(\kappa)\cdot |I|^{\bigoh(1)}$, such that $\kappa' \leq g(\kappa)$ and $(I, \kappa) \in Q$ if and
only if $(I', \kappa') \in Q'$, where $f$ and $g$ are computable
functions. By \emph{\FPT-time}, we denote time of the form $f(\kappa)\cdot |I|^{\bigoh(1)}$, where $f$ is a computable function.
Based on the notion of \FPT-reducibility, a hierarchy of
parameterized complexity, {\it the \cc{W}-hierarchy} $=\bigcup_{t
\geq 0} \W{t}$, where $\W{t} \subseteq \W{t+1}$ for all $t \geq 0$, has
been introduced, in which the $0$-th level \W{0} is the class {\it
\FPT}. The notions of hardness and completeness have been defined for each level
\W{$i$} of the \cc{W}-hierarchy for $i \geq 1$. It is commonly believed that $\W{1} \neq \FPT$, and the
\W{1}-hardness has served as the main working hypothesis of fixed-parameter
intractability. The class \XP{} contains parameterized problems that can be solved in time  $\bigoh(|I|^{f(\kappa)})$, where $f$ is a computable function; it
contains the class \W{t}, for $t \geq 0$, and every problem in \XP{} is polynomial-time solvable when the parameters are bounded by a constant.
The class \paraNP{} is the class of parameterized problems that can be solved by non-deterministic algorithms in time $f(\kappa)\cdot |I|^{\bigoh(1)}$, where $f$ is a computable function.
A problem is \emph{\paraNP{}-hard} if it is \NP-hard for a constant value of the parameter~\cite{FlumGrohe06}.

A parameterized problem is {\em kernelizable}
if there exists a polynomial-time reduction that maps an instance $(I, \kappa)$ of
the problem to another instance $(I', \kappa')$ such that (1) $|I'| \leq f(\kappa)$ and $\kappa' \leq f(\kappa)$, where $f$ is a computable function, and (2) $(I,\kappa)$ is a \yes-instance
of the problem if and only if $(I',\kappa')$ is. The instance
$(I',\kappa')$ is called the {\em kernel} of~$I$. It is well known that a
decidable problem is \FPT{} if and only if it is
kernelizable~\cite{DowneyFellows13}.
A \emph{polynomial kernel} is a kernel whose size can be bounded by a
polynomial in the parameter.


\smallskip
\noindent \textbf{Structure of Missing Entries.}
As we will later show, it is not possible to obtain fixed-parameter tractability for clustering or finding a large cluster
when the occurrence of missing entries (\ie, $\blank$'s) is not restricted in any way. On the other hand, when we restrict the total number of missing entries to be upper bounded by a constant or a parameter, our problems trivially reduce to the complete-data setting,
since one can enumerate all values in these few missing entries by brute force.
Hence, the interesting question is whether we can solve the
problems when the number of $\blank$'s is large but also restricted
in a natural way. We do so by using the so-called \emph{covering number} as a parameter, a setting which naturally captures instances where incomplete data is caused by the addition of a few new vectors (\ie, rows) and/or coordinates (\ie, columns)~\cite{icml}.

Formally, let $\{\vec{v}_1,\ldots \vec{v}_n\}$ be an arbitrary but fixed ordering of a subset $M$ of $\{0,1,\blank\}^d$. If $\vec{v}_i[j]=\blank$, we say that $\blank$ at $\vec{v}_i[j]$ is \emph{covered} by row $i$ and column $j$. The \emph{covering number} of $M$, denoted as $\comb(M)$ or simply $\comb$ where it is clear from the context, is the minimum value of $r+c$ such that there exist $r$ rows and $c$ columns in $M$ with the property that each occurrence of $\blank$ is covered by one of these rows or columns. We will generally assume that for a set $M\in \{0,1,\blank\}^d$ we have computed sets $T_M$ and $R_M$ such that $\comb(M)=|T_M|+|R_M|$ and each $\blank$ occurring in a vector $\vec{v}\in M$ is covered by a row in $R_M$ or a column in $T_M$; we note that this computation may be done in polynomial time~\cite[Proposition 1]{icml}, and in our algorithms parameterized by $\comb(M)$ we will generally assume that $T_M$ and $R_M$ have already been pre-computed.

}

\section{The Toolkit}\label{sec:structural}
In this section, we present key structural results that are employed in several algorithms and lower bounds in the paper. The first part of our toolkit and structural results for matrices are obtained by exploiting the classical sunflower lemma of Erd\"os and Rado, a powerful combinatorial tool that has been used to obtain kernelization algorithms for many fundamental parameterized problems~\cite{FominLokshtanovSaurabhZehavi19}.
A \emph{sunflower} in a set family $\FFF$ is a subset $\FFF' \subseteq \FFF$ such that all pairs of elements in $\FFF'$ have the same intersection.

%

\begin{lemma}[Erd\"os and Rado~\citeyear{Erdos60}; Flum and Grohe~\citeyear{FlumGrohe06}]\label{lem:SF}
  Let $\FFF$ be a family of subsets of a universe $U$, each of cardinality exactly
  $b$, and let $a \in \mathbb{N}$. If $|\FFF|\geq b!(a-1)^{b}$, then $\FFF$
  contains a sunflower $\FFF'$ of cardinality at least $a$. Moreover,
  $\FFF'$ can be computed in time polynomial in $|\FFF|$.
\end{lemma}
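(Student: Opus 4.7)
The plan is to prove the Sunflower Lemma by induction on $b$, the common cardinality of the sets in $\FFF$; the algorithmic claim will follow immediately from the constructive nature of the argument. The base case is essentially trivial: when $b=1$, the sets are distinct singletons, and any $a$ of them already form a sunflower with empty core, so the statement reduces to a counting check.

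For the inductive step I would assume the statement for families of $(b-1)$-element sets and proceed as follows. First, greedily build a maximal subfamily $\GGG \subseteq \FFF$ of pairwise disjoint sets. If $|\GGG| \geq a$, then $\GGG$ itself is a sunflower with empty core and we are done. Otherwise $|\GGG| \leq a-1$, so the union $Y = \bigcup_{F \in \GGG} F$ satisfies $|Y| \leq (a-1)b$, and by maximality of $\GGG$ every set in $\FFF$ must meet $Y$.

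Next, a pigeonhole argument on the incidences between $\FFF$ and $Y$ yields some element $y \in Y$ lying in at least $|\FFF|/|Y| \geq b!(a-1)^b / ((a-1)b) = (b-1)!(a-1)^{b-1}$ sets of $\FFF$. Consider the trimmed family $\FFF_y = \SB F \setminus \{y\} \SM F \in \FFF, y \in F \SE$; every member of $\FFF_y$ has cardinality exactly $b-1$, and its size meets the bound required by the inductive hypothesis. Applying induction produces a sunflower $\SSS' \subseteq \FFF_y$ of cardinality at least $a$, and re-inserting $y$ into each of its members gives the desired sunflower $\FFF' \subseteq \FFF$, whose core contains $y$.

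The main subtlety, and the only place where care is needed, is lining up the pigeonhole estimate with the inductive bound; the choice $b!(a-1)^b$ is calibrated exactly so that dividing by the worst-case size of $Y$ drops the parameter from $b$ to $b-1$ cleanly. For the algorithmic statement, observe that building the maximal disjoint subfamily $\GGG$, computing $Y$, locating the heavy element $y$ by counting occurrences, and forming $\FFF_y$ can each be carried out in time polynomial in $|\FFF|$; since the recursion depth is bounded by $b$ and each recursive call shrinks the universe, the total running time stays polynomial in $|\FFF|$.
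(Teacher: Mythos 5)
The paper does not prove this lemma at all---it is quoted from Erd\H{o}s--Rado via Flum and Grohe---so there is no internal proof to compare against; your argument is exactly the standard inductive proof of the Sunflower Lemma (maximal pairwise-disjoint subfamily, pigeonhole on a heavy element of its union, recurse on the trimmed family, reinsert the element into the core), and both the combinatorial bound and the polynomial-time claim go through as you describe. One boundary point is worth flagging: as quoted in the paper the hypothesis is $|\mathcal{F}| \geq b!(a-1)^{b}$, and with this non-strict bound your base case does not quite close (for $b=1$ you are only guaranteed $a-1$ singletons, not the $a$ needed); the classical statement, and the bound your induction actually supports, is the strict inequality $|\mathcal{F}| > b!(a-1)^{b}$, so this is an off-by-one in the paper's citation rather than a defect in your argument, but you should state the bound you are really using.
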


\myparagraph{Finding Irrelevant Vectors.}
\label{subsec:sunflower}
The first structural lemma we introduce is Lemma~\ref{lem:CLUS-SF-BASIC}, which is also illustrated in Figure~\ref{fig:sunflower}. Intuitively speaking, the lemma says that if the $t$-Hamming neighborhood of a vector $\vec{v}$ contains a large sunflower, then at least one of its elements can be
removed without changing the maximum distance to any vector $\vec{a}$ that is of
distance at most $r$ to the elements in the sunflower. The proof of Lemma~\ref{lem:CLUS-SF-BASIC} utilizes the straightforward Lemma~\ref{lem:CLUS-SF-BASIC-A}, which captures a useful observation that is also used in other proofs.

We note that the idea of applying the Sunflower Lemma on a similar set representation of an instance was also used in a previous work by Marx~(\citeyear{marx}) (see also Kratsch, Marx and Wahlstr\"om, \citeyear{kmw}) to obtain \FPT{} and kernelization results, albeit in the context of studying the weighted satisfiability of CSPs.
There, the authors used the sunflower to reduce the arity of constraints by replacing the sets in the sunflower (which correspond to the scope of the constraints) by constraints defined over the petals without the core plus one additional constraint defined only on the variables of the core. We, however, use the sunflower in a different manner, namely to identify irrelevant vectors that can be safely removed from the instance.
Note also that in contrast to many other applications of the sunflower, where all petals are removed and replaced by the core, this is not possible in our setting since we need to keep a certain number of petals in order to maintain the clustering properties of the instance.
%
%

\begin{figure}

  \begin{minipage}[c]{3.8cm}
      \begin{equation*}
        \begin{array}{cc}
          \vec{v}= (0, 0, 0, 0, 0, 0, 0, 0)\\[5mm]
          \vec{a} = (1, 1, 1, 1, 1, 0, 0, 0)\\
        \end{array}
      \end{equation*}
    \end{minipage}
    \begin{minipage}[c]{3.5cm}
      \includegraphics[width=3.4cm]{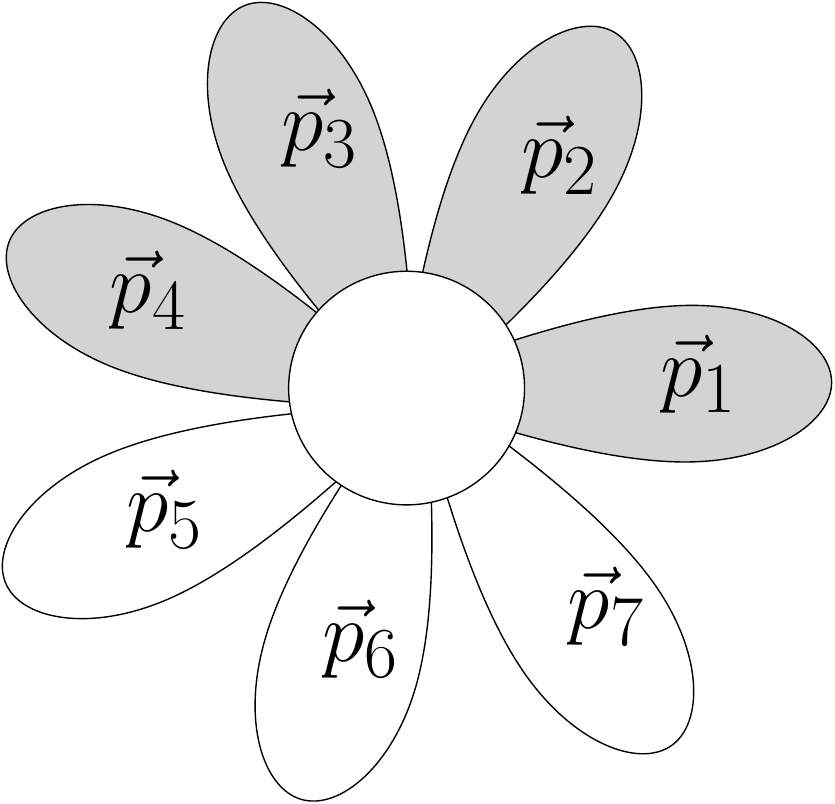}
    \end{minipage}
    \caption{The figure shows an example for the setting in
      Lemma~\ref{lem:CLUS-SF-BASIC}.  Here
      $r=3$ and $t=2$ and the figure shows the vectors $\vec{v}$ and $\vec{a}$ as well as the
      sunflower resulting from the vectors $\vec{p}_1,\dotsc,\vec{p}_7$
      with $\vec{p}_i[j]=1$ if and only if either $j=1$ or $j=i+1$.
      In this example three of the petals, \ie, the
      white petals $\vec{p}_5$, $\vec{p}_6$, and $\vec{p}_7$, only share
      the core of the sunflower with $\vec{a}$, which implies that all
      three of these petals are of maximum Hamming distance to $\vec{a}$. 
    }  \label{fig:sunflower}
\end{figure}

\lv{\begin{lemma}}
  \sv{\begin{lemma}}
    \label{lem:CLUS-SF-BASIC-A}
  Let $t, r \in \Nat$. Let $N \subseteq \{0,1\}^d$ be a set
  of $t$-vectors such that $\FFF:=\HSET(N)$ is a sunflower with core
  $C$. If $|N|>r$, then
  for every vector $\vec{a} \in \{0,1\}^d$ with $|\HSET(\vec{a})|\leq
  r$, $N$ contains a vector that has maximum distance to $\vec{a}$ among all
  $t$-vectors that contain $C$.
\end{lemma}
\lv{\begin{proof}
  Let $\vec{a} \in \{0,1\}^d$ with $|\HSET(N,\vec{a})|\leq r$ be
  arbitrary. Since $|N|>r$, there is
  a vector $\vec{n}\in N$ such that $\HSET(\vec{n})\cap
  \HSET(\vec{a})\subseteq C$, which implies that the distance of
  $\vec{n}$ to $\vec{a}$ is maximum among all $t$-vectors that contain $C$.
\end{proof}
}

\lv{\begin{lemma}}
  \sv{\begin{lemma}}
    \label{lem:CLUS-SF-BASIC}
  Let $t, r \in \Nat$, $\vec{v} \in \{0,1\}^d$, $N
  \subseteq \HN{\vec{v}}{=t}$, and $\FFF:=\SB \HSET(\vec{v},\vec{x}) \SM x
  \in N \SE$ such that $\FFF$ is a sunflower with core $C$.
  Then for every subset $N'$ of $N$ with $|N'|\geq r+t+2$ and
  every vector $\vec{a} \in \{0,1\}^d$ such that $\HDIST(N',\vec{a})\leq r$,
  we have
  $\HDIST(\vec{f},\vec{a})\leq \max_{\vec{x} \in N' \setminus
    \{\vec{f}\}}\HDIST(\vec{x},\vec{a})$
  for every $\vec{f} \in N$.
\end{lemma}
\lv{\begin{proof}

    Figure~1 
    illustrates an example situation for this
  lemma.
  Since $\HDIST(N',\vec{a})\leq r$ and $\HDIST(\vec{x},\vec{v})=t$
  for every $\vec{x}\in N$, we obtain that
  $\HDIST(\vec{v},\vec{a})\leq r+t$. Since $|N'\setminus
  \{\vec{f}\}|> r+t$, we obtain from Lemma~\ref{lem:CLUS-SF-BASIC-A}
  that $N'\setminus \{\vec{f}\}$ contains a vector that has maximum
  distance to $\vec{a}$ among all $t$-vectors that contain $C$, and
  hence in particular to all vectors in $N$.
\end{proof}
}
The following lemma now employs Lemmas~\ref{lem:CLUS-SF-BASIC} and~\ref{lem:SF} to show that if the
$t$-Hamming neighborhood of a vector $\vec{v}$ is large enough, at least
one of its elements can be removed without changing the clustering
properties of the instance.
\lv{\begin{lemma}}
  \sv{\begin{lemma}}
    \label{lem:CLUS-SF-NEIGH}
    Let $k, r, t \in \Nat$, $M \subseteq \{0,1\}^d$,
    $\vec{v} \in M$, and $N:=\HN{\vec{v}}{=t}\cap M$. If $|N|\geq
    t!(k(r+t+2))^{t}$, then there is a vector $\vec{f} \in N$
    satisfying the following two properties:
    \lv{  \begin{enumerate}[leftmargin=*,widest=(P1)]
      \item[(P1)] for every set $S \subseteq \{0,1\}^d$ with $|S|\leq k$ and satisfying
        $\HDIST(S,\vec{m})\leq r$ for every $\vec{m} \in M$,
        it holds that $\max_{\vec{y} \in M}\HDIST(S,\vec{y})=\max_{\vec{y} \in M\setminus \{\vec{f}\}}\HDIST(S,\vec{y})$; and
      \item[(P2)] $M$ has a partition
        into at most $k$ clusters, each of diameter at most $r$, if and only if
        $M\setminus \{\vec{f}\}$ does.
      \end{enumerate}
    }
    \sv{
      \textbf{\textup{(P1)}} for every set $S \subseteq \{0,1\}^d$ with $|S|\leq k$ and satisfying
      $\HDIST(S,\vec{m})\leq r$ for every $\vec{m} \in M$,
      it holds that $\max_{\vec{y} \in M}\HDIST(S,\vec{y})=\max_{\vec{y} \in M\setminus \{\vec{f}\}}\HDIST(S,\vec{y})$; and
      \textbf{\textup{(P2)}} $M$ has a partition
      into at most $k$ clusters, each of diameter at most $r$, if and only if
      $M\setminus \{\vec{f}\}$ does.
    }Moreover, $\vec{f}$ can be determined in time polynomial in $M$.
  \end{lemma}

  \lv{\begin{proof}
      \sv{\begin{proof}[Proof Sketch]}
        \newcommand{\HCL}{\Gamma}
        Let $\FFF:=\SB \HSET(\vec{v},\vec{x}) \SM \vec{x} \in N \SE$. Then
        $|\FFF|=|N|\geq t!(k(r+t+2))^{t}$ and $|F|=t$ for every $F \in
        \FFF$. By Lemma~\ref{lem:SF}, $\FFF$ contains a
        sunflower, say $\FFF'$, of size larger than $k(r+t+2)$ (with core $C$).

        We denote by $N(F)$
        the vector in $N$ giving rise to the element $F \in \FFF$, \ie,
        $F=\HSET(\vec{v},N(F))$. Moreover, for a subset $\FFF''$ of
        $\FFF$ we denote by $N(\FFF'')$ the set $\SB N(F) \SM F \in \FFF''\SE$.
        Let $F \in \FFF'$ be arbitrarily chosen. We claim that setting $\vec{f}$ to the vector
        $N(F)$ satisfies the claim of the lemma.
        Since $|\FFF'\setminus \{F\}|\geq k(r+t+2)$, we obtain
        that:
        \begin{enumerate}[leftmargin=*,widest=(1)]
        \item[(1)] for every set $S \subseteq \{0,1\}^d$ with $|S|\leq k$ there
          is a vector $\vec{s} \in S$ such that $|N'|\geq r+t+2$, where $N':=\SB \vec{u} \in
          \{0,1\}^d \SM \HDIST(\vec{s},\vec{u})= \HDIST(S,\vec{u}) \SE \cap N(\FFF')$.
        \item[(2)] For every partition $\PPP$ of $M\setminus \{\vec{f}\}$ into at most $k$ sets
          there is a set $P \in \PPP$ with $|P \cap N(\FFF'\setminus
          \{F\})|\geq r+t+2$.
        \end{enumerate}

        Towards showing (P1), let $S\subseteq \{0,1\}^d$ with $|S|\leq
        k$. By (1), there is a vector $\vec{s} \in S$ with $|N'|\geq
        r+t+2$. Then $t$, $r$,
        $\vec{v}$, $N$, and $N'$ satisfy the conditions of
        Lemma~\ref{lem:CLUS-SF-BASIC}. By observing that
        $\HDIST(N',\vec{s})\leq r$, we obtain that
        $\HDIST(\vec{f},\vec{s}) \leq \max_{x \in N'\setminus
          \{\vec{f}\}}\HDIST(\vec{x},\vec{s})$. Consequently, $\max_{\vec{y}
          \in M}\HDIST(S,\vec{y})=\max_{\vec{y} \in
          M\setminus\{\vec{f}\}}\HDIST(S,\vec{y})$, which shows (P1).
        \sv{To establish (P2), we note that the forward direction is trivial and for
          the backward direction we use (2) and Lemma~\ref{lem:CLUS-SF-BASIC}.
        }

        Towards showing (P2), first note that the forward direction holds
        trivially. Towards showing the other direction, let $\PPP$ be any
        partition of $M \setminus \{\vec{f}\}$ into at most $k$ sets, each of
        diameter at most $r$. By (2), there is a set $P \in \PPP$
        with $|P\cap N(\FFF'\setminus \{F\})|\geq r+t+2$. Let $N'$ be the
        set $(P\cap N(\FFF'\setminus \{F\}) \cup \{\vec{f}\}$.
        Then $t$, $r$,
        $\vec{v}$, $N$, $N'$
        satisfy the conditions of Lemma~\ref{lem:CLUS-SF-BASIC}. By
        observing that $\HDIST(N',\vec{p})\leq r$ for every $\vec{p}\in P$,
        we obtain that $\HDIST(\vec{f},\vec{p}) \leq \max_{x \in N'\setminus
          \{\vec{f}\}}\HDIST(\vec{x},\vec{p})$ for every $\vec{p} \in
        P$. Hence $P\cup \{\vec{f}\}$ has diameter at most $r$, which
        implies that the partition obtained from $\PPP$ after adding $\vec{f}$
        to $P$ is a partition of $M$ into at most $k$ clusters, each of diameter
        at most $r$.
      \end{proof}}

\myparagraph{Irrelevant Coordinates and Diameter Bound.}
\label{ssec:irrelevantcoors}
Our clustering algorithms for \IADCLUSq{} will broadly proceed in two
steps. Given an instance $\inst=(M,k,r)$ of \IADCLUSq{}, we will first
compute an equivalent instance $(M',k,r)$ such that the size of $M'$
can be bounded by a function of the parameter $k+r+\comb(M)$ (this is done by the \emph{irrelevant vector technique}). However,
since our aim is to obtain a kernel, we then still need to reduce the
number of coordinates for every vector in $M'$. That is where we use our \emph{irrelevant coordinate technique}. This subsection introduces the tools and notions that are central to this technique. Throughout this section, we will assume that $\inst=(M,k,r)$ is
the considered input instance of \IADCLUSq{}.

Let $\DCOOR(M)$ for $M \subseteq \{0,1\}^d$ be the set of
all coordinates $i$ such that at least two vectors in $M$ disagree on
their $i$-th coordinate, \ie, there are two vectors $\vec{y},\vec{y}'\in M$
such that $\{\vec{y}[i],\vec{y}'[i]\}=\{0,1\}$.
Intuitively,
$\DCOOR(M)$ is the set of \emph{important coordinates}, since all other
coordinates can be safely removed from the instance; this is because
they can always be completed to the same value and hence do not
influence the properties of a clustering of $M$. Note that if we could
show that the number of important coordinates is bounded by a function
of $M'$ and our parameter $k+r+\comb(M)$, then we would obtain a
kernel by simply removing all coordinates that are not
important. Unfortunately, this is not the case for two reasons: First
the compatibility graph $\compG(\III)$ can consist of more than one
component and the vectors in different components can differ in
arbitrary coordinates. Furthermore, even inside a component the number
of important coordinates can be arbitrary large. For instance, a
component could consist of the all-zero vector, the all-one vector,
and the all $\blank$ vector. Note that the all $\blank$ vector is
crucial for this example and indeed, the next lemma shows that if we restrict
ourselves to a component containing only vectors in $M\setminus R_M$,
then the number of important coordinates can be bounded in terms of
the diameter and the number of vectors inside the component.

\lv{\begin{lemma}}
  \sv{\begin{lemma}}
\label{lem:diam-comp}
  Let $M' \subseteq M\setminus R_M$ such that $\compG(\inst)[M']$ is
  connected. Then $|\DCOOR(M')\setminus T_M| \leq \DIAMq(M')(|M'|-1)$.
\end{lemma}
\lv{\begin{proof}
  Let $\vec{m} \in M'$ be arbitrary. Then for every vector $\vec{m}'
  \in M'$, there are at most $\DIAMq(M')$ coordinates in $[d]\setminus
  T_M$ such that $\vec{m}[i]\neq \vec{m}'[i]$. Therefore in total there are at most
  $(\DIAMq(M'))(|M'|-1)$ coordinates in  $[d]\setminus
  T_M$ for which any vector in
  $M'$ differs from $\vec{m}$ and hence $|\DCOOR(M')\setminus T_M| \leq
  \DIAMq(M')(|M'|-1)$.
\end{proof}}
The next lemma now shows how to bound the diameter of every component
in $M\setminus R_M$ in terms of our parameter $k+r+\comb(M)$.
\lv{\begin{lemma}}
  \sv{\begin{lemma}}
    \label{lem:IN-Q-DIAM-BOUND}
  Let $\inst=(M,k,r)$ be an instance of \INCLUSq{}, \ANYCLUSq{}, or
  \PAIRCLUSq{} and let $M' \subseteq M\setminus R_M$ be such that
  $\compG(\inst)[M']$ is connected. Then $\inst$ is a \NO\hy instance
  if either:
  \begin{itemize}
  \item $\inst$ is an instance of \INCLUSq{} and $\DIAMq(M')>3rk-r+|T_M|$;
  \item $\inst$ is an instance of \ANYCLUSq{} and $\DIAMq(M')>4rk-r+|T_M|$; or
  \item $\inst$ is an instance of \PAIRCLUSq{} and $\DIAMq(M')>2rk-r+|T_M|$.
  \end{itemize}
\end{lemma}
\lv{\begin{proof}
  We start by showing the statement for the case of \INCLUS{}. Assume
  for a contradiction that $\inst$ is a \YES\hy instance of \INCLUS{}
  and let $C_1, \ldots, C_k$ be a partitioning of $M$ into $k$ clusters, each of
  radius at most $r$. Consider any two vectors $\vec{a}, \vec{b} \in
  M'$; let $v_a$ and $v_b$ be their corresponding vertices in
  $\compG(\inst)$, and let $P_{ab}$ be a shortest path between $v_a$ and
  $v_b$ in $\compG(\inst)[M']$, which exists since $\compG(\inst)[M']$ is
  connected. By minimality of $P_{ab}$, $P_{ab}$ cannot
  contain more than three vertices corresponding to vectors in the
  same cluster, as otherwise, $P_{ab}$ could be shortcut by going
  through the center of that cluster. It follows that the length of
  $P_{ab}$ is at most $3k-1$.
  Since every edge in $\compG(\inst)$ represents a Hamming distance of at
  most $r$ between the two vectors to the endpoints of
  the edge and all $\blank$-entries of every vector in $M'$ is
  contained in $T_R$, the lemma follows.
  The proofs for \ANYCLUS{} and \PAIRCLUS{} are analogous.
\end{proof}
}

We now already know how to bound the number of important coordinates
inside a component of $M\setminus R_M$. Unfortunately, as we have
illustrated previously, it is not possible to do the same for $M\setminus R_M$, let alone for the complete vector set $M$. However, the following lemma shows that there is a (small) set
$D'$ of coordinates that satisfy a slightly weaker property: it preserves distances up to $r$ within components of $M\setminus R_M$ as
well as to and between the vectors in $R_M$.
\lv{\begin{lemma}}
  \sv{\begin{lemma}}
    \label{lem:diam-complete}
  Let $M' \subseteq M$ and $r'$ be a natural number. Then there is a subset $D' \subseteq [d]$ of
  coordinates such that:
  \begin{itemize}
  \item(C1) $|D'|\leq (k\MDIAMq(M')+|R_M|(|M'|-1))(r'+1)+|T_M|$ and
  \item(C2) for any two vectors $\vec{m}$ and $\vec{m}'$ in $M'$
    such that $\vec{m}$ and $\vec{m}'$ are in the same component of
    $\compG(\inst)[M']$ or one of $\vec{m}$ or $\vec{m}'$ is in $R_M$,
    it holds
    that $\HDIST(\vec{m}, \vec{m}')=\HDIST(\vec{m}[D'], \vec{m}'[D'])$
    if $\HDIST(\vec{m}, \vec{m}')\leq r'$ and $\HDIST(\vec{m}[D'],
    \vec{m}'[D'])>r'$, otherwise.
  \end{itemize}
  Here $\MDIAMq(M')$ is equal to the maximum
  diameter of any connected component of $\compG(\inst)[M']$.
\end{lemma}
\lv{\begin{proof}
  Note that we can assume w.l.o.g. that $\compG(\III)[M']$ has at most
  $k$ components, since otherwise $\inst$ is a trivial \NO\hy
  instance. But then, we obtain from Lemma~\ref{lem:diam-comp} that the set
  \[
     D_0=\bigcup_{C\textup{ is a component of
       }\compG(\III)[M']}\DCOOR(C)
   \]
  has size at most $k\MDIAMq(M')(|M'|-1)+|T_M|$.
  Moreover, $\HDIST(\vec{m}, \vec{m}')=\HDIST(\vec{m}[D_0],
  \vec{m}'[D_0])$ for any two vectors $\vec{m}$ and $\vec{m}'$ in
  $M'\setminus R_M$ that are in the same component of
  $\compG(\inst)[M']$.
  Hence it only remains to ensure that condition
  (C2) is satisfied if (at least) one of $\vec{m}$ and $\vec{m}'$ is
  in $R_M$. To achieve this we add the following coordinates to $D_0$
  for every two vectors $\vec{m}\in R_M$ and $\vec{m}'\in M'\setminus
  \{\vec{m}\}$:
  \begin{itemize}
  \item if $\HDIST(\vec{m},\vec{m}')\leq r'$, then we add the (at most
    $r'$) coordinates in $\HSET(\vec{m},\vec{m}')$ to $D_0$, otherwise
  \item we add an arbitrary subset of $\HSET(\vec{m},\vec{m}')$ of size
    exactly $r'+1$ to $D_0$.
  \end{itemize}
  Let $D'$ be the set obtained from $D_0$ in this manner.
  Then $D'$ clearly satisfies (C2). Finally,
  $|D'|\leq |D_0|+|R_M|(|M'|-1)(r'+1)$ since we add at most $r'+1$
  coordinates to $D_0$ for every $\vec{m} \in R_M$ and $\vec{m}' \in
  M'\setminus \{\vec{m}\}$.
\end{proof}
}

The following lemma now shows that keeping only the set $D'$ of
coordinates is sufficient to
preserve the equivalence for our three clustering problems.
\lv{\begin{lemma}}
  \sv{\begin{lemma}}
    \label{lem:impot-coor-equiv}\sloppypar
  Let $M' \subseteq M$. Then we can compute a set $D' \subseteq [d]$ of
  coordinates in polynomial-time such that:
  \begin{itemize}
  \item $|D'|\leq (k\MDIAMq(M')+|R_M|(|M'|-1))(2r+1)+|T_M|$ and $(M',k,r)$ is a \YES\hy instance of
    $\ANYCLUSq{}$ if and only if $(M_{D'}',k,r)$ is.
  \item $|D'|\leq (k\MDIAMq(M')+|R_M|(|M'|-1))(r+1)+|T_M|$ and for $X\in \{\probfont{In,Diam}\}$:  $(M',k,r)$ is a \YES\hy instance of \CLUSq{$X$}
  if and only if $(M_{D'}',k,r)$ is.

  \end{itemize}
  Here, $M_{D'}'$ is the matrix obtained from $M'$ after removing all
  coordinates (columns) that are not in $D'$.
\end{lemma}
\lv{\begin{proof}
  We start by showing the result for $\ANYCLUSq{}$.
  Let $D'$ be the set of coordinates obtained from
  Lemma~\ref{lem:diam-complete} for $M'$ and $r'=2r$ satisfying (C1)
  and (C2). Because of (C1), it holds that $|D'|\leq
  (k\DIAMq(M')+|R_M|(|M'|-1))(2r+1)+|T_M|$. It remains to show that
  $(M',k,r)$ and $(M_{D'}',k,r)$ are equivalent instances of $\ANYCLUSq{}$.

  Clearly any solution (\ie, a completion and $k$-clustering for that completion) for $(M',k,r)$ is also a solution for
  $(M_{D'}',k,r)$; since all we did was remove a set of coordinates, all distances in the completion can only become smaller. For the forward direction, let $M_{D'}''$
  be a completion of $M_{D'}'$ leading to a solution with at most $k$ centers $S\subseteq \{0,1\}^{|D'|}$ for $(M_{D'}'',k,r)$, and assume that $S$ is inclusion-minimal.
  Consider a cluster given by a center $\vec{s} \in S$ and let
  $\vec{m}\in M_{D'}''$ be a vector in that cluster, \ie,
  $\HDIST(\vec{s},\vec{m})\leq r$; note that $\vec{m}$ exists since $S$ is
  inclusion-wise minimal. Let $\vec{m}' \in M_{D'}'$ be any other
  vector in that cluster; if the cluster consists only of the vector
  $\vec{m}$, we can replace it with a cluster with center $\vec{m}$
  for the instance $(M',k,r)$. Then $\HDIST(\vec{m},\vec{m}')\leq
  2r$ and because $D'$ satisfies (C2) with $r'=2r$ also $\HDIST(\vec{m}_+,\vec{m}_+')\leq
  2r$, where $\vec{m}_+$ and $\vec{m}_+'$ are the vectors in $M'$
  corresponding to $\vec{m}$ and $\vec{m}'$,
  respectively. Therefore, all coordinates where $\vec{m}_+$ and
  $\vec{m}_+'$ differ are contained in $D'$, which implies that both
  vectors can be completed to be equal in all other coordinates,
  \ie, the coordinates in $[d]\setminus
  \HSET(\vec{m}_+,\vec{m}_+')$; let $\vec{m}_+^c$ be one such
  completion of $\vec{m}_+$. Since we choose $\vec{m}_+'$
  arbitrarily this is true also for all other vectors in the
  cluster. Hence, the vector $\vec{s}' \in \{0,1,\blank\}^d$, which
  is equal to $\vec{s}$ for all coordinates in $D'$ and equal to
  $\vec{m}_+^c$ on all other coordinates can be used as
  a center for $(M',k,r)$ replacing $\vec{s}$. Applying the same
  procedure for all centers in $S$, we obtain a solution for
  $(M',k,r)$, as required.

  We now show the result for $\INCLUSq{}$ and $\PAIRCLUSq{}$.
  Let $D'$ be the set of coordinates obtained from
  Lemma~\ref{lem:diam-complete} for $M'$ and $r'=r$ satisfying (C1)
  and (C2). Because of (C1), it holds that $|D'|\leq
  (k\DIAMq(M')+|R_M|(|M'|-1))(r+1)+|T_M|$. It remains to show that
  $(M',k,r)$ and $(M_{D'}',k,r)$ are equivalent instances of
  $\INCLUSq{}$ respectively $\PAIRCLUS{}$.

  Note that the forward direction of the claim is again trivial for
  both $\INCLUSq{}$ and $\PAIRCLUSq{}$, since we are only removing
  coordinates and hence the distances can only become smaller. Towards
  showing the backward direction, we will distinguish between the case
  for $\INCLUSq{}$ and $\PAIRCLUSq{}$. In the former case, let $M_{D'}''$
  be a completion of $M_{D'}'$ leading to the (inclusion-wise minimal)
  solution $S\subseteq M_{D'}''$ for $(M_{D'}'',k,r)$, \ie, a set of at most $k$ centers, and
  consider a cluster given by a center $\vec{s} \in S$. If the cluster
  does not contain any other vector apart from $\vec{s}$, then we can
  replace $\vec{s}$ with the corresponding vector in $M'$ (and use any
  completion). Otherwise, let $\vec{m} \in M_{D'}''$ be a vector in
  the cluster distinct from $\vec{s}$, \ie,
  $\HDIST(\vec{s},\vec{m})\leq r$. Because $D'$ satisfies (C2) with
  $r'=r$ also $\HDIST(\vec{s}_+,\vec{m}_+)\leq
  r$, where $\vec{s}_+$ and $\vec{m}_+$ are the vectors in $M'$
  corresponding to $\vec{s}$ and $\vec{m}$,
  respectively. Therefore, all coordinates where $\vec{s}_+$ and
  $\vec{m}_+$ differ are contained in $D'$, which implies that both
  vectors can be completed to be equal in all other coordinates,
  \ie, the coordinates in $[d]\setminus
  \HSET(\vec{s}_+,\vec{m}_+')$; let $\vec{s}_+^c$ be one such
  completion for $\vec{s}_+$. Since we choose $\vec{m}_+$
  arbitrarily this is true also for all other vectors in the
  cluster. Hence, we can use the vector $\vec{s}_+^c$ as a replacement
  for the vector $\vec{s}$ and complete all vectors inside the cluster
  for $\vec{s}$ according to $\vec{s}_+^c$. Applying the same
  procedure for all centers in $S$, we obtain a solution for
  $(M',k,r)$, as required.

  In the latter case, \ie, the case of $\PAIRCLUSq{}$, let $M_{D'}''$
  be a completion of $M_{D'}'$ leading to the (inclusion-wise minimal)
  solution $\PPP$ for $(M_{D'}'',k,r)$, \ie, a partition of $M_{D'}'$
  into at most $k$ clusters, and
  consider a cluster $P \in \PPP$. Let $\vec{m}$ be any vector in $P$;
  which exists since $\PPP$ is inclusion-wise minimal. Then
  $\HDIST(\vec{m},\vec{m}')\leq r$ for every other vector $\vec{m}'\in
  P$. Moreover, since $D'$ satisfies (C2) also
  $\HDIST(\vec{m}_+,\vec{m}_+')\leq r$ for the vectors $\vec{m}_+$ and
  $\vec{m}_+'$ in $M'$ corresponding to $\vec{m}$ and $\vec{m}'$,
  respectively. Hence both
  vectors can be completed to be equal in all other coordinates,
  \ie, the coordinates in $[d]\setminus
  \HSET(\vec{m}_+,\vec{m}_+')$; let $\vec{m}_+^c$ be one such
  completion for $\vec{m}_+$. Since we choose $\vec{m}_+'$
  arbitrarily this is true also for all other vectors in the
  cluster. Hence, completing all vectors corresponding to vectors in $P$
  according to $\vec{m}_+^c$ and obtain a cluster for $(M',k,r)$
  containing the same vectors. Applying the same
  procedure for all sets in $\PPP$, we obtain a solution for
  $(M',k,r)$, as required.
\end{proof}
}

\lv{
\subsection{A Generic Reduction}
\label{subsec:reduction}
Here, we present a generic construction that is used in several hardness proofs throughout the paper.

Let $G$ be a graph, where $V(G)=\{v_1, \ldots, v_n\}$ and $m=|E(G)|$, and let $deg(v_i) \leq n-1$ denote the degree of $v_i$ in $G$. Fix an arbitrary ordering ${\cal O}=(e_1, \ldots, e_m)$ of the edges in $E(G)$.
For each vertex $v_i \in V(G)$, define a vector $\vec{a_i} \in \{0, 1\}^m$ to be the incidence/characteristic vector of $v_i$ w.r.t.~${\cal O}$; that is, $\vec{a_i}[j]=1$ if $v_i$ is incident to $e_j$ and $\vec{a_i}[j]=0$ otherwise. Afterwards, expand the set of coordinates of these vectors by adding to each of them $n(n-1)$ ``extra'' coordinates, $n-1$ coordinates for each $v_i$, $i \in [n]$; we refer to the $n-1$ (extra) coordinates of $v_i$ as the ``private'' coordinates of $v_i$. For each $v_i$, $i \in [n]$, we will choose a number $x_i \in \{0, \ldots, n-1\}$, where the choice of the number $x_i$ will be problem dependent, and we will set $x_i$ many coordinates among the private coordinates of $v_i$ to 1, and all other extra private coordinates of $v_i$ to 0. Let $M=\{\vec{a_i} \mid i \in [n]\}$ be the set of expanded vectors, where $\vec{a_i} \in \{0, 1\}^{m+n(n-1)}$ for $i \in [n]$. We have the following straightforward observation:

\begin{observation}
\label{obs:genericreduction}
For each $v_i$, where $i \in [n]$, the number of coordinates in $\vec{a_i}$ that are equal to 1 is exactly $deg(v_i) + x_i$, and
two distinct vertices $v_i, v_j$ satisfy $\HDIST(\vec{a_i},\vec{a_j}) =deg(v_i)+x_i + deg(v_j) +x_j$ if $v_i$ and $v_j$ are nonadjacent in $G$ and $\HDIST(\vec{a_i},\vec{a_j}) =deg(v_i)+x_i + deg(v_j) +x_j -2$ if $v_i$ and $v_j$ are adjacent.
\end{observation}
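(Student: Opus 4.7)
The plan is to prove both claims by direct bookkeeping: split the $m + n(n-1)$ coordinates of each vector into the \emph{edge coordinates} (positions $1,\dots,m$ given by the incidence construction with respect to $\mathcal{O}$) and the \emph{extra coordinates}, which in turn partition into the $n$ private blocks, one of size $n-1$ per vertex. By the construction, $\vec{a_i}$ may be nonzero only in the edge coordinates corresponding to edges incident to $v_i$ and in the private block of $v_i$; every other position of $\vec{a_i}$ is $0$.

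For the first claim, I would count the $1$-entries of $\vec{a_i}$ region by region. The edge coordinates of $\vec{a_i}$ contain exactly $deg(v_i)$ ones, since by definition $\vec{a_i}[j]=1$ precisely when $v_i$ is incident to $e_j$. In the extra coordinates, $\vec{a_i}$ has $x_i$ ones (all lying in the private block of $v_i$) and zeros everywhere else. Summing gives $deg(v_i)+x_i$.

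For the second claim, I would compute $\HDIST(\vec{a_i},\vec{a_j})$ for distinct $v_i,v_j$ by summing disagreements in each region. In the edge coordinates, $\vec{a_i}$ and $\vec{a_j}$ disagree on $e_\ell$ iff exactly one of $v_i,v_j$ is incident to $e_\ell$. If $v_iv_j\notin E(G)$, the two incidence sets are disjoint and contribute $deg(v_i)+deg(v_j)$ disagreements. If $v_iv_j\in E(G)$, the unique common edge $v_iv_j$ is a $1$ in both vectors (hence contributes $0$), and the remaining $deg(v_i)-1$ and $deg(v_j)-1$ incident edges each contribute one disagreement, for a total of $deg(v_i)+deg(v_j)-2$. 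In the extra coordinates, disagreements occur only within the private blocks of $v_i$ and $v_j$: the private block of $v_i$ contributes $x_i$ disagreements (those entries are $1$ in $\vec{a_i}$ but $0$ in $\vec{a_j}$), the private block of $v_j$ contributes $x_j$, and every other private block is zero in both vectors and hence contributes nothing. Adding the two regional counts yields the two formulas in the statement.

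There is no genuine obstacle here: the only subtlety lies in treating the adjacent case carefully, where one must observe that the shared edge coordinate holds a $1$ in both vectors and therefore reduces the edge-region disagreement count by exactly $2$ compared to the non-adjacent case. All other bookkeeping is immediate from the fact that private blocks of vertices other than $v_i,v_j$ are identically zero in both $\vec{a_i}$ and $\vec{a_j}$.
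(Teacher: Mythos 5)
Your proof is correct, and it follows exactly the direct coordinate-by-coordinate counting that the paper treats as immediate (the observation is stated there without proof as "straightforward"). The region split into edge coordinates and private blocks, and the handling of the shared edge in the adjacent case, are precisely the intended argument.
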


Throughout the paper, we denote by ${\cal R}$ the polynomial-time reduction that takes as input a graph $G$ and returns the set of vectors $M$ described above.
}

\section{Clustering with Incomplete Data}
\label{sec:fpt-incomplete}

%

We will show that \IADCLUSq{} are fixed-parameter
tractable parameterized by $k+r+\comb(M)$. Our algorithmic results are achieved via kernelization: we will apply the irrelevant vector and irrelevant coordinate techniques to obtain an equivalent instance of size
upper bounded
by a function of $k+r+\comb(M)$. 

Note that this implies that also the variants
\IADCLUS{} for complete data  are
fixed-parameter tractable parameterized by only $k+r$ (and also have a
polynomial kernel) and, as we will show in
a later section, both parameters are indeed
required. To explain how we obtained our results, we will start by considering
the general procedures for complete data first and then provide the
necessary changes for the case of incomplete data. Throughout the
section we will assume that $(M,k,r)$ is the given instance of \IADCLUSq{}.
Recall that, when using the parameter $\comb(M)$, we will use the sets $T_M$ and $R_M$
(as defined in the preliminaries), where $T_M\subset [d]$, $R_M\subset M$, and $|T_M|+|R_M|=\comb(M)$, and such that all $\blank$'s in $M\setminus
R_M$ occur only in coordinates in~$T_M$.

\myparagraph{Informal description of the algorithm for complete data.}
To perform kernelization, we start by identifying and removing irrelevant vectors;
those are vectors that can be removed from the instance and safely
added back to any valid clustering of the reduced instance to yield a
valid clustering of the original instance. One caveat is that,
for \INCLUS{}, the removed vectors may serve as cluster centers, and
hence, such vectors will have to be represented in the reduced
instance; we will discuss later (below) how this issue is dealt with.
To identify irrelevant vectors, we first show that, for each vector, we
can compute a ``representative set'' of vectors of its $(\leq
r)$-neighborhood whose size is upper bounded by a function of the
parameter.
The identification of representative sets is achieved via a
non-trivial application of the Sunflower Lemma (and several other
techniques) in
Lemmas~\ref{lem:CLUS-SF-BASIC},~\ref{lem:CLUS-SF-NEIGH}
as well as
Lemma~\ref{lem:UN-CLUS-SF-COM-NEIGH} for \ANYCLUS{},
Lemma~\ref{lem:UN-CLUS-SF-COM-NEIGH} and~\ref{lem:UN-IN-POT-SOL-SET}
for \INCLUS{},
and Lemma~\ref{lem:DIAM-CLUS-SF-COM-NEIGH} for \ANYCLUS{}.
The union of these representative
sets yields a reduced instance whose number of vectors is upper
bounded by a function of the parameter. For the final step of our
algorithm we use our toolkit to reduce the
number of dimensions for every vector in the reduced instance. This is already sufficient to solve \ANYCLUS{}.

As for \INCLUS{}, we need to ensure that the centers of the clusters
in any valid solution are represented in the reduced
instance (whose size is now bounded by a function of the parameter).
To do so, we partition the
set of vectors removed from the reduced instance into
equivalence classes based on their ``trace'' on the set of
important coordinates; the number of equivalence classes is upper
bounded by a function of the parameter. Since each potential center
must be within distance $r$ from some
vector in the reduced instance, for each (irrelevant) vector $\vec{x}$
that differs in at most~$r$ important coordinates from some vector in
the reduced instance, we add a vector from the equivalence class of
$\vec{x}$ (that represents $\vec{x}$) whose distance to the vectors in
the reduced instance w.r.t.~nonimportant coordinates (which all
vectors in the reduced instance agree on) is
minimum. Lemma~\ref{lem:UN-IN-POT-SOL-SET} provides a bound on the number of these added vectors.


\myparagraph{Finding Redundancy when Data is Missing.}
In the case of incomplete data, we will in principle employ the same general strategy
that we used for clustering problems with complete data. Namely, we will again identify
irrelevant vectors and coordinates whose removal results in an
instance whose size can be bounded by our parameter. However, due to
the presence of incomplete data, we
need to make significant adaptations at every step of the algorithm.

Consider the first step of the algorithm, which allowed
us to identify and remove irrelevant vectors. For this
step, we can focus only on the vectors in $M\setminus R_M$,
since $|R_M|$ is already bounded by $\comb(M)$; crucially, this allows us to
assume that vectors only
have $\blank$\hy entries at positions in~$T_M$.

Now consider Lemma~\ref{lem:CLUS-SF-NEIGH}, which allowed
us to remove any vector, say $\vec{f}$, in a sufficiently large sunflower occurring in
the $t$-Hamming neighborhood of some vector $\vec{v}$. Informally, this was because in every solution of the reduced
instance, a large part of the sunflower must end up together in one of
the clusters; this in turn meant that for every vector in the cluster
there is a vector in the sunflower that is at least as far as $\vec{f}$.
This is what allowed us to
argue that $\vec{f}$ can always be safely added back into that
cluster. But this can no longer be guaranteed once $\blank$\hy entries are allowed, since whether $\vec{f}$ can be added back into the cluster or not depends on how the other vectors in the sunflower have been completed.
%

Note that
the problem above would disappear if we could ensure that a sufficiently
large number of vectors from
the initial sunflower that end up together in the same cluster
have the $\blank$\hy entries at the \emph{exact same positions}. Since we
observed earlier that we can assume that all vectors have their
$\blank$\hy entries only in $T_M$, and consequently there are at most
$2^{|T_M|}$ different allocations of the $\blank$\hy entries to these
vectors, we can now enforce this by enlarging the initial
sunflower by a factor of $2^{|T_M|}$. This approach allows us to obtain the
following lemma, which uses Lemma~\ref{lem:CLUS-SF-NEIGH} in a way that allows us to reduce the number of vectors for \INCLUSq{} and \ANYCLUSq.




\lv{\begin{lemma}}
  \sv{\begin{lemma}}
\label{lem:UN-CLUS-SF-COM-NEIGH}
  Let $k, r \in \Nat$ and $M \subseteq \{0,1,\blank\}^d$.
  Then there is a subset $M'$ of $M$ with $R_M \subseteq M'$ satisfying:
  \begin{itemize}
  \item(P1) For every $\vec{v} \in M\setminus R_M$ it holds that
    $|\HN{\vec{v}}{r}\cap M'\setminus R_M|\leq 2^{|T_M|}(\sum_{t=1}^{r}t!(k(r+t)+2)^{t})$; and
  \item(P2) for every set $S \subseteq \{0,1\}^d$ with $|S|\leq k$
    and satisfying $\HDIST(S,\vec{m})\leq r$ for every $\vec{m} \in M$
    it holds that $\max_{\vec{y} \in M}\HDISTq(S,\vec{y})=\max_{\vec{y} \in M'}\HDISTq(S,\vec{y})$.
  \end{itemize}
Moreover, $M'$ can be computed in time polynomial in $M$.
\end{lemma}
\lv{\begin{proof}

	We obtain $M'$
	using the following algorithm.
	Initially, we set $M'$ to $M$.
	Then for every $\vec{v} \in M\setminus R_M$, every $Q\subseteq T_M$, and every $t$ with $1 \leq t \leq r$ 
	we do the following.

  We denote by $M_Q$ the subset of $M\setminus R_M$ with $\{i\mid \vec{a}[i]=\blank\} = Q$ for every vector $\vec{a}\in M_Q$. Moreover, let $\sigma_Q: \{0,1,\blank\}^d\rightarrow \{0,1\}^{d-|Q|}$ be the mapping that given a vector $\vec{a}\in \{0,1,\blank\}^d$ outputs the vector $\sigma_Q(\vec{a})$ which skips all the coordinates of $\vec{a}$ in $Q$, or more formally, for all $i\in [d]$ and $j= |Q\cap [i]|$, we let $\sigma_Q(\vec{a})[i-j]=\vec{a}[i]$.
   Note that for a $\vec{a}\in M_Q$, the mapping $\sigma_Q$ skips exactly all the coordinates with $\blank$. Therefore, for every pair of vectors $\vec{a}\in M_Q$ and $\vec{s}\in \{0,1\}^d$ it holds $\HDISTq(\vec{a},\vec{s}) = \HDIST(\sigma_Q(\vec{a}),\sigma_Q(\vec{s}))$.

   We denote by $M'_Q$ the set $M'\cap M_Q$.
	 Now we apply Lemma~\ref{lem:CLUS-SF-NEIGH} to $\sigma_Q(\vec{v})$ and $\sigma_Q(M'_Q)$ exhaustively, \ie,
	as long as $|N|=|N_{=t}(M')|\geq t!(k(r+t+2))^t$, we use the lemma
	to find the vector $\vec{f}\in \sigma_Q(M'_Q)$, we remove from $M'$ the vector $\vec{g}\in M'_Q$ such that $\sigma_Q(\vec{g})=\vec{f}$, and apply the
	lemma again. Let $M'$ be the subset of $M$ obtained in this manner.
	Then (P1) clearly holds and (P2) follows from (P1)
	in Lemma~\ref{lem:CLUS-SF-NEIGH} and the observation that, for
        every pair of vectors $\vec{a}\in M_Q$ and $\vec{s}\in
        \{0,1\}^d$, it holds that $\HDISTq(\vec{a},\vec{s}) =
        \HDIST(\sigma_Q(\vec{a}),\sigma_Q(\vec{s}))$. Finally, $R_M
        \subseteq M'$ since we did not
        removed any vector in $R_M$ from $M$.
\end{proof}
}

Using the above Lemma~\ref{lem:UN-CLUS-SF-COM-NEIGH} together with our toolbox
(for reducing the number of relevant coordinates), we are now ready to show our first
fixed-parameter algorithm for \ANYCLUSq{}.
\lv{\begin{theorem}}
\sv{\begin{theorem}}
\label{thm:Q-any}
  \ANYCLUSq{} is \FPT\ parameterized by $k+r+\comb(M)$.
\end{theorem}
\lv{
  \begin{proof}
    Let $(M,k,r)$ be the given instance of \ANYCLUS{} and let $M'$ be
    the set obtained using Lemma~\ref{lem:UN-CLUS-SF-COM-NEIGH} for $M$,
    $k$, and $2r$. Because
    $M'$ satisfies (P2), it holds that $(M,k,r)$ and $(M',k,r)$ are
    equivalent instances of \ANYCLUS{}. Consider a solution $S\subseteq
    \{0,1\}^d$ for $(M',k,r)$, with $|S|\leq k$, and a vector $\vec{s} \in
    S$. Since we can assume
    that $S$ is minimal, it holds that $\HN{\vec{s}}{r}\cap M'\neq
    \emptyset$ for every $\vec{s} \in S$. Let $\vec{y} \in
    \HN{\vec{s}}{r}\cap M'$ be arbitrarily chosen. Then $\HN{\vec{s}}{r}\cap M'
    \subseteq \HN{\vec{y}}{2r}\cap M'$. Moreover, since $M'$ satisfies
    (P1), it follows that $\HN{\vec{y}}{2r}\cap M'$ and thus also
    $\HN{\vec{s}}{r}\cap M'$ has size at most
    $\sum_{t=1}^{2r}t!(k(2r+t+2))^{t}+1$. Consequently, if
    $|M'|>k((\sum_{t=1}^{2r}t!(k(2r+t+2))^{t})+1)$, we can safely
    return that $(M,k,r)$ is a \NO\hy instance of \ANYCLUSq{}.
    Thus,
    $|M'|\leq k((\sum_{t=1}^{2r}t!(k(2r+t+2))^{t})+1)$ and
    it remains to reduce the number of coordinates for each vector in
    $M'$. Let $D'$ be the set of coordinates obtained from
    Lemma~\ref{lem:impot-coor-equiv} for $M'$. Then $(M',k,r)$ and
    $(M_{D'}',k,r)$ are equivalent instances of $\ANYCLUSq{}$ and
    moreover we obtain from Lemma~\ref{lem:IN-Q-DIAM-BOUND} that
    $\MDIAMq(M')\leq 4rk-r+|T_M|$.
    Therefore, we obtain:

    \[\begin{array}{ccc}
      |D'| & \leq & (k\MDIAMq(M')+|R_M|(|M'|-1))(r'+1)\\
           & \leq & (k(4rk-r+|T_M|)+|R_M|(|M'|-1))(2r+1)
      \end{array}\]

    showing that the size of $D'$ is bounded by our parameter
    $k+r+\comb(M)$. Hence $(M_{D'}',k,r)$ is a kernel for
    $(M,k,r)$ and $\ANYCLUSq{}$ is fixed-parameter
    tractable parameterized by $k+r+\comb(M)$.
\end{proof}
}
Towards showing our kernelization result for \INCLUSq{}, we
need to add back some vectors that can be potential centers for the
clusters containing vectors of $M'$. The main idea for the case of
complete data is the observation that every vector in $M$ that can act
as a potential center for the instance on $M'$ must be within the
$r$-neighborhood of some vector in $M'$ and moreover among all
(potentially many vectors within the $r$-neighborhood of a vector in
$M'$), we can chose any vector, which is closest w.r.t. the unimportant
coordinates, \ie, the coordinates in $[d]\setminus \DCOOR(M')$. This
way the number of potential vectors that can act as a center for a
vector in $M'$ can be bounded by the parameter. For the case of
incomplete data we need to consider an additional complication,
namely, that the $\blank$
entries of the vectors in $M'$ (which can be changed without
increasing the Hamming distance to the vector), can increase the size of the
$r$-Hamming neighborhood of every such vector now significantly.
For instance, the potential $r$-Hamming neighborhood of a vector
in $M'\setminus R_M$ increases by a factor of $2^{|T_M|}$ and
the potential $r$-Hamming neighborhood of a vector $\vec{x}$ in $R_M$ can only be
bounded by $2^{\DIAM(M')(|M'|-1)}3^{|T_M|}$, since every
important coordinate of $\vec{x}$ could be a~$\blank$.

\lv{\begin{lemma}}
  \sv{\begin{lemma}}
\label{lem:UN-IN-POT-SOL-SET}
  Let $(M,k,r)$ be an instance of \INCLUSq{} and $M'\subseteq M$ with $R_M\subseteq M'$. Then
  there is a set $M''$ with $M' \subseteq M'' \subseteq M$ of size at
  most $|M'|+3^{|T_M|}r^{|R_M|}+k3^{2|T_M|}r^{|R_M|}2^{\MDIAMq(M')(|M'|-1)}$ such that
  there is a set $S\subseteq M$ with $|S|\leq k$ satisfying
  $\max_{\vec{y}\in M'}\HDISTq(S,\vec{y})\leq r$ if and only if there
  is a set $S\subseteq M''$ with $|S|\leq k$ satisfying
  $\max_{\vec{y}\in M'}\HDISTq(S,\vec{y})\leq r$. Moreover, $M''$ can
  be computed in polynomial time.
\end{lemma}
\lv{\begin{proof}
  First note that if $|R_M|=|M|$, then $M''=M$ trivially satisfies the
  conditions of the lemma.
  Let $S\subseteq M$ with $|S|\leq k$ satisfying $\max_{\vec{y}\in
    M'}\HDISTq(S,\vec{y})\leq r$ and let $\vec{s} \in S$. Since we can
  assume that $S$ is minimal (w.r.t.~satisfying $\max_{\vec{y}\in
    M'}\HDISTq(S,\vec{y})\leq r$), we obtain that $\vec{s}$
  must be within the $r$-Hamming neighborhood of some vector $\vec{y}
  \in M'$, \ie, for every $\vec{s} \in S$ there is a vector $\vec{y}
  \in M'$ such that $\HDISTq(\vec{s},\vec{y})\leq r$.

  We start by defining what it means for two vectors $\vec{m}$ and
  $\vec{m}'$ to be equivalent w.r.t. the vectors in $R_M$ in the sense
  that if $\vec{m}$ can be used as a center containing a vector in
  $R_M$ then so can $\vec{m}'$ and vice versa. Namely,
  we say that two vectors $\vec{m}$ and $\vec{m}'$ are
  \emph{equivalent w.r.t. $R_M$}, denoted $\equiv^{R_M}$,
  if and only if they agree on the coordinates in $T_M$, \ie,
  $\vec{m}[i]=\vec{m}[i]$ for every $i\in T_M$, and for every
  $\vec{x}$ in $R_M$ it holds that:
  \begin{itemize}
  \item $\HDIST(\vec{m},\vec{x})=\HDIST(\vec{m}',\vec{x})$ if
    $\HDIST(\vec{m},\vec{x})\leq r$ and
  \item $\HDIST(\vec{m}',\vec{x})>r$ otherwise.
  \end{itemize}
  Clearly, this guarantees that if a vector $\vec{x}$ is part of a
  cluster with center $\vec{m}$, then $\vec{x}$ will still be
  contained in the cluster if $\vec{m}$ is replaced by $\vec{m}'$ and
  completed in the same manner as $\vec{m}$. Note also that this
  defines an equivalence relation for the vectors in $M\setminus R_M$
  and that the number of equivalence classes is at most
  $3^{|T_M|}r^{|R_M|}$.

  We now have to define what it means for two vectors $\vec{m}$ and
  $\vec{m}'$ to be equivalent w.r.t. the remaining vectors in
  $M$, \ie, the vectors in $M\setminus R_M$. Towards this aim
  let $C$ be a component of $\compG(\inst)[M'\setminus R]$. Then it
  follows from Lemma~\ref{lem:diam-comp} that $|\DCOOR(C)\cup T_M|\leq
  \DIAMq(C)(|C|-1)+|T_M|$ and hence all vectors in $C$ agree on all
  coordinates outside of $|\DCOOR(C)\cup T_M|$.
  We say that two
  vectors $\vec{m}$ and $\vec{m}'$ in $M\setminus R_M$ that are in the
  $r$-Hamming neighborhood of some vector in $C$ are \emph{equivalent w.r.t. $C$}, denoted
  $\vec{m}\equiv_{C}\vec{m}'$, if and only if they agree on all
  coordinates in $\DCOOR(C)\cup T_M$. Then $\equiv_C$ is also an
  equivalence relation for all vectors that are in the $r$-Hamming
  neighborhood of some vector in $C$ and moreover the number of
  equivalence classes is at most $2^{|\DCOOR(C)\setminus
    T_M|}3^{|T_M|}$. Note, however, that even if
  $\vec{m}\equiv_C\vec{m}'$ and some vector $\vec{c} \in C$ is
  contained in a cluster with center $\vec{m}$, we cannot simple
  replace $\vec{m}$ by $\vec{m}'$ because $\vec{m}$ and $\vec{m}'$
  might have a different Hamming distance to $\vec{c}$ when
  considering the coordinates outside of $\DCOOR(C)\cup
  T_M$. Nevertheless, it still suffices to keep only one vector from every
  equivalence class, namely, a vector that is closest to any (all)
  vectors in $C$ w.r.t. the coordinates in $[d] \setminus
  (\DCOOR(C)\cup T_M)$; recall that all vectors in $C$ agree on all
  coordinates in $[d] \setminus (\DCOOR(C)\cup T_M)$. Note that
  $\equiv$ has at most $3^{|T_M|}r^{|R_M|}2^{|\DCOOR(C)\setminus
    T_M|}3^{|T_M|}=3^{2|T_M|}r^{|R_M}2^{|\DCOOR(C)\setminus T_M|}$
  equivalence classes.

  We can now combine these two equivalence relations into one.
  Namely, we say that two vectors $\vec{m}$ and
  $\vec{m}'$ in $M\setminus R_M$ are \emph{equivalent w.r.t. $C$ and $R_M$}, denoted by
  $\vec{m}\equiv_C^{R_M}\vec{m}'$, if and only if
  $\vec{m}\equiv^{R_M}\vec{m'}$ and $\vec{m}\equiv_C\vec{m'}$.

  Let $M_0$ be the set of all vectors (in $M\setminus M'$) defined as
  follows. First for every component $C$ of $\compG(\inst)[M']$ and every
  equivalence class $P$ of $\equiv_C^{R_M}$, the set $M_0$ contains
  a vector that is closest to any/every vector in $C$
  w.r.t. the coordinates in $[d]\setminus (\DCOOR(C)\cup T_M)$ among
  all vectors in $P$. Finally, we also
  add to $M_0$ an arbitrary vector for every equivalence class of
  $\equiv^{R_M}$.
  We claim that setting $M''$ to $M'\cup M_0$
  satisfies the claim of the lemma. We start by showing that
  there is a set $S\subseteq M$ with $|S|\leq k$ satisfying
  $\max_{\vec{m}\in M'}\HDISTq(S,\vec{m})\leq r$ if and only if there
  is a set $S\subseteq M''$ with $|S|\leq k$ satisfying
  $\max_{\vec{m}\in M'}\HDISTq(S,\vec{m})\leq r$. The reverse
  direction is trivial since $M''$ is a subset of $M$. Towards showing
  the forward direction let $S\subseteq M$ with $|S|\leq k$ satisfying
  $\max_{\vec{m}\in M'}\HDISTq(S,\vec{m})\leq r$ and let $\vec{s} \in
  S$. Then $\vec{s} \in M\setminus R_M$ since $R_M\subseteq M''$. Let $P_{\vec{s}}$
  be the set of all vectors in $M'$ that are in the cluster with
  center $\vec{s}$. If $P_{\vec{s}} \subseteq R_M$
  and we replace $\vec{s}$ with a vector $\vec{s}' \in M''$
  such that $\vec{s}\equiv^{R_M}\vec{s}'$. Since $\vec{s}$ and
  $\vec{s}'$ agree on all coordinates in $T_M$,
  we can complete $\vec{s}'$ in the same way as
  $\vec{s}$. Moreover, because $\HDIST(\vec{s},\vec{m})\leq r$ for
  every $\vec{m} \in P_{\vec{s}}$ and $\vec{m} \in R_M$, we obtain
  that $\HDIST(\vec{s}',\vec{m})=\HDIST(\vec{s},\vec{m})\leq r$, as required.
  Otherwise, let $\vec{m} \in P_{\vec{s}}$ be a vector with $\vec{m}
  \in M'\setminus R_M$ and let $C$ be the component of
  $\compG(\inst)[M']$ containing $\vec{m}$. Then $P_{\vec{s}}\setminus
  R_M\subseteq C$. We claim that we can replace $\vec{s}$ with the
  vector $\vec{s}'$ in $M''$ such that
  $\vec{s}'\equiv_C^{R_M}\vec{s}$. Since $\vec{s}'$ agrees with
  $\vec{s}$ on all coordinates in $T_M$, we can complete $\vec{s}'$ in
  the same manner as $\vec{s}$.
  Let $\vec{x}$ be a vector in $P_{\vec{s}}$. If $\vec{x} \notin R_M$,
  then $\HDIST(\vec{s},\vec{x})\leq \HDIST(\vec{s}',\vec{x})$, because
  $\vec{s}$ and $\vec{s}'$ agree on all coordinates in $\DCOOR(C) \cup
  T_M$ and $\vec{s}'$ is closest to all vectors in $C$ (and in
  particular to $\vec{x}$) w.r.t.~to all other coordinates. Moreover, if
  $\vec{x} \in R_M$, then $\HDIST(\vec{s},\vec{x})\leq r$ and hence
  $\HDIST(\vec{s}',\vec{x})=\HDIST(\vec{s},\vec{x})\leq r$, as required.

  We are now ready to bound the size of $M_0$ is terms of our
  parameter $k+r+\comb(M)$ and the size of $M'$. Apart from the $3^{|T_M|}r^{|R_M|}$
  vectors (one for every equivalence class of $\equiv^{R_M}$), $M_0$
  contains $3^{2|T_M|}r^{|R_M|}2^{|\DCOOR(C)\setminus T_M|}$ vectors
  for every component $C$ of $\compG(\inst)[M']$ (one for every
  equivalence class of $\equiv_C^{R_M}$). Since $\compG(\inst)[M']$
  can have at most $k$ components, since otherwise $\inst$ is a
  \NO\hy instance,  and using the fact that $|\DCOOR(C)\setminus T_M|\leq
  \MDIAMq(M')(|M'|-1)$ for every component $C$ of $\compG(\inst)[M']$ (Lemma~\ref{lem:diam-comp}), we obtain that:

  \[
    |M_0|\leq
    3^{|T_M|}r^{|R_M|}+k3^{2|T_M|}r^{|R_M|}2^{\MDIAMq(M')(|M'|-1)}.
  \]

\noindent   It is straightforward to verify that $M_0$ can be computed in
  polynomial time.
  \end{proof}
}

With Lemma~\ref{lem:UN-IN-POT-SOL-SET} in hand, we can establish the fixed-parameter tractability of \INCLUSq{}.
\lv{\begin{theorem}}
\sv{\begin{theorem}}
\label{the:any-q-kernel}
  \INCLUSq{} is \FPT\ parameterized by $k+r+\comb(M)$.
\end{theorem}
\lv{
\begin{proof}
  Let $(M,k,r)$ be the given instance of \INCLUS{} and let $M'$ be
  the set obtained using Lemma~\ref{lem:UN-CLUS-SF-COM-NEIGH}. Since
  $M'$ satisfies (P2), it holds that $(M,k,r)$ has a solution if and
  only if there is a set $S\subseteq M$ with $|S|\leq k$
  such that $\max_{\vec{y} \in M'}\HDISTq(S,\vec{y})\leq r$. Since
  $M'$ satisfies (P1), we can safely return that $(M',k,r)$ is a \NO\hy
  instance if
  $|M'|>k(2^{|T_M|}(\sum_{t=1}^{r}t!(k(r+t)+1)^{t})+|R_M|+1) = f(k,r,\comb(M))$. Hence,
  w.l.o.g., we can assume that $|M'|\leq
  f(k,r,\comb(M))$, and it only remains to
  bound the number of vectors in $M\setminus M'$ that could
  potentially be in a solution. Let $M''$ be the set obtained from
  Lemma~\ref{lem:UN-IN-POT-SOL-SET} for $M$ and $M'$. Then $(M,k,r)$ and
  $(M''\cup M',k,r)=(M'',k,r)$ are equivalent instances of \INCLUSq{}. Moreover,
  $|M''|\leq |M'|+3^{|T_M|}r^{|R_M|}+k3^{2|T_M|}r^{|R_M|}2^{\MDIAMq(M')(|M'|-1)}$,
  which together with
  Lemma~\ref{lem:IN-Q-DIAM-BOUND} implies that $|M''|$ is also bounded
  by a function of $k+r+\comb(M)$.
  Finally,
  it remains to reduce the number of coordinates for each vector in
  $M''$. Let $D'$ be the set of coordinates obtained from
  Lemma~\ref{lem:impot-coor-equiv} for $M''$. Then $(M'',k,r)$ and
  $(M_{D'}'',k,r)$ are equivalent instances of $\INCLUSq{}$ and
  moreover we obtain from Lemma~\ref{lem:IN-Q-DIAM-BOUND} that
  $\MDIAMq(M'')\leq 3rk-r+|T_M|$.
  Therefore, we obtain:

  \[\begin{array}{ccc}
      |D'| & \leq & (k\MDIAMq(M'')+|R_M|(|M''|-1))(r'+1)\\
           & \leq & (k(3rk-r+|T_M|)+|R_M|(|M''|-1))(r+1)
      \end{array}\]

  showing that the size of $D'$ is bounded by our parameter
  $k+r+\comb(M)$. Hence, $(M_{D'}'',k,r)$ is a kernel for
  $(M,k,r)$ and $\INCLUSq{}$ is fixed-parameter
  tractable parameterized by $k+r+\comb(M)$.
\end{proof}
}

We now proceed to the last of the three problems considered in this section, \PAIRCLUSq{}. Apart
from the issue that we already had for \INCLUSq{} and \ANYCLUSq{}
that we require a sunflower of vectors with all $\blank$s in the
same position, we now have the additional complication that we can
no longer assume that the $\blank$\hy entries of vectors that end up
in the same cluster are completed in the same way;
note that this is not an issue for \INCLUSq{} and \ANYCLUSq{} since
there one can always assume that all elements in a cluster are
completed the same way as the center vector. We show that
this problem can be handled by increasing the size of the sunflower by an additional
factor of $2^{|T_M|}$. Because of the same issue, we also need to
take into account the potential distance between different vectors
in the same cluster arising from the possibility of different
completions of the coordinates in $T_M$. This leads to the following
version of Lemma~\ref{lem:UN-CLUS-SF-COM-NEIGH} for \PAIRCLUSq{}.
\lv{
\begin{lemma}\label{lem:UN-CLUS-SF-NEIGH-DIAM}
  Let $k, r, t \in \Nat$, $M \subseteq \{0,1,\blank\}^d$,
  $\vec{v} \in M$, 
  and let $N:=\HN{\vec{v}}{=t}\cap M\setminus R_M$. If $|N|\geq
  2^{|T_M|}t!(2^{|T_M|}k(r+t+|T_M|+2))^{t}+1$, then there is a vector $\vec{f} \in N$
  satisfying the following property:
  \begin{quote}
   $M$ has a completion $M'\subseteq \{0,1\}^d$ with a partition
    into at most $k$ clusters, each of diameter at most $r$, if and only if
    $M\setminus \{\vec{f}\}$ does.
  \end{quote}
  Moreover, $\vec{f}$ can be determined in time polynomial in $M$.
\end{lemma}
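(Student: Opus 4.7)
The plan is to generalize the argument of Lemma~\ref{lem:CLUS-SF-NEIGH} to the diameter variant with incomplete data. The main new obstacle is that two vectors sharing a \PAIRCLUSq{} cluster may be completed independently on their unknown positions---in contrast to \INCLUS{}/\ANYCLUS{}, where a cluster's members can be assumed to match the center on all coordinates---so the extra pigeonhole steps that will be needed are exactly what forces the two additional $2^{|T_M|}$ factors in the size bound.

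First, I would partition $N$ by the exact $\blank$-pattern $Q\subseteq T_M$ of each vector; since there are only $2^{|T_M|}$ such patterns, one class $N_Q$ has size at least $t!(2^{|T_M|}k(r+t+|T_M|+2))^{t}$. Using the projection map $\sigma_Q$ from the proof of Lemma~\ref{lem:UN-CLUS-SF-COM-NEIGH} to drop the coordinates of $Q$, and applying Lemma~\ref{lem:SF} to $\{\HSET(\sigma_Q(\vec{v}),\sigma_Q(\vec{x})):\vec{x}\in N_Q\}$, I obtain a sunflower of size at least $2^{|T_M|}k(r+t+|T_M|+2)+1$ with some core $C$. Let $\vec{f}\in N_Q$ be a vector corresponding to any one of its petals. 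The forward direction is immediate: any valid completion and partition of $M$ yields one of $M\setminus\{\vec{f}\}$ by simply discarding $\vec{f}$.

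For the backward direction, fix a completion $M^*$ of $M\setminus\{\vec{f}\}$ and a partition $\PPP$ into at most $k$ clusters of diameter at most $r$; for $\vec{x}\in M\setminus\{\vec{f}\}$, let $\vec{x}^c\in M^*$ denote its completion. Distributing the $\geq 2^{|T_M|}k(r+t+|T_M|+2)$ petals other than $\vec{f}$ among the clusters of $\PPP$, some cluster $P\in\PPP$ must contain at least $2^{|T_M|}(r+t+|T_M|+2)$ of them; a further pigeonhole over the $\leq 2^{|T_M|}$ completions of the $Q$-positions then isolates a subset $P'\subseteq P$ of size at least $r+t+|T_M|+2$ whose vectors all share a common completion $\vec{q}\in\{0,1\}^{|Q|}$ on $Q$. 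I then complete $\vec{f}$ by setting $\vec{f}^c[Q]:=\vec{q}$ (its entries outside $Q$ are already determined, as $\vec{f}\in N_Q$) and add $\vec{f}^c$ to $P$.

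It remains to show that $\HDIST(\vec{f}^c,\vec{p}^c)\leq r$ for every $\vec{p}\in P$. Writing $\bar Q:=[d]\setminus Q$, the $Q$-contribution $\HDIST(\vec{f}^c[Q],\vec{p}^c[Q])$ coincides with $\HDIST(\vec{x}^c[Q],\vec{p}^c[Q])$ for every $\vec{x}\in P'$, by the choice of $\vec{q}$. For the $\bar Q$-contribution, I apply Lemma~\ref{lem:CLUS-SF-BASIC} to the sub-sunflower $\sigma_Q(P')$ (of size at least $r+t+2$) and to $\vec{a}:=\vec{p}^c[\bar Q]$, using that $\HDIST(\sigma_Q(\vec{x}),\vec{a})=\HDIST(\vec{x}^c[\bar Q],\vec{p}^c[\bar Q])\leq r$ for any $\vec{x}\in P'$; the lemma then produces an $\vec{x}\in P'$ with $\HDIST(\sigma_Q(\vec{f}),\vec{a})\leq \HDIST(\sigma_Q(\vec{x}),\vec{a})$. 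Combining the $Q$- and $\bar Q$-contributions yields $\HDIST(\vec{f}^c,\vec{p}^c)\leq \HDIST(\vec{x}^c,\vec{p}^c)\leq r$, as required. The main obstacle is precisely this two-stage pigeonhole argument that pins down a large group of petals sharing both a cluster and a common $Q$-completion, without which the sunflower-based replacement trick of Lemma~\ref{lem:CLUS-SF-NEIGH} would fail when $\blank$-entries are present and cluster members need not be completed in the same way.
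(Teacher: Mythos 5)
Your proposal is correct and follows essentially the same route as the paper's proof: pigeonhole over the $\blank$-patterns in $T_M$ to get a large uniform class, extract a sunflower of size $2^{|T_M|}k(r+t+|T_M|+2)$, pigeonhole the remaining petals over the $\leq k$ clusters and then over the $\leq 2^{|T_M|}$ completions of the shared pattern, complete $\vec{f}$ identically to the resulting subset $P'$, and invoke Lemma~\ref{lem:CLUS-SF-BASIC} to dominate $\vec{f}$'s distance to every cluster member. The only (harmless) difference is bookkeeping: you split distances into the $Q$- and $\bar Q$-coordinates and apply Lemma~\ref{lem:CLUS-SF-BASIC} on the projection with the original parameter $t$, whereas the paper works with the completed vectors in the full space and a shifted parameter $t'\in[t-|T_M|,t+|T_M|]$, which is why its size bound carries the extra $|T_M|$ slack.
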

\begin{proof}
	\newcommand{\HCL}{\Gamma}
	Let us first define an equivalence relation $\sim$ over $N\setminus R_M$ depending on which subset of $T_M$ contains $\blank$. That is, for two vectors $\vec{a},\vec{b}\in N$, we say $\vec{a}\sim\vec{b}$ if and only if $\{i\mid \vec{a}[i]=\blank\} = \{i\mid \vec{b}[i]=\blank\}$.
	
	Now let $N^\blank$ be a maximum size equivalence class of $\sim$, and let $\FFF:=\SB \HSET(\vec{v},\vec{x}) \SM \vec{x} \in N^\blank \SE$. Then
	$|\FFF|=|N^\blank|\geq t!(2^{|T_M|}k(r+t+|T_M|+2))^{t}$, $|F|=t$ for every $F \in
	\FFF$, and all the vectors of $N^\blank$ have $\blank$\hy entries at the same indices. By Lemma~\ref{lem:SF}, $\FFF$ contains a
	sunflower, say $\FFF'$, of size at least $2^{|T_M|}k(r+t+|T_M|+2)$ (and core $C$).
	
	We denote by $N^\blank(F)$
	the vector in $N^\blank$ giving rise to the element $F \in \FFF$, \ie,
	$F=\HSET(\vec{v},N^\blank(F))$. Moreover, for a subset $\FFF'$ of
	$\FFF$ we denote by $N^\blank(\FFF')$ the set $\SB N^\blank(F) \SM F \in \FFF'\SE$.
	Let $F \in \FFF'$ be arbitrarily chosen. We claim that setting $\vec{f}$ to the vector
	$N^\blank(F)$ satisfies the claim of the lemma.
	Since $|\FFF'\setminus \{F\}|\geq 2^{|T_M|}k(r+t+|T_M|+2)$, we observe the following:
	\begin{quote}
		for every partition $\PPP$ of $M\setminus \{\vec{f}\}$ into at most $k$ sets
		there is a set $P \in \PPP$ with $|P \cap N^\blank(\FFF'\setminus \{F\})|\geq 2^{|T_M|}(r+t+|T_M|+1)+1$.
	\end{quote}
	
	We are now ready to prove the lemma. First note that the forward direction of the lemma holds trivially.
Towards showing the other direction, let $M'_{\vec{f}}\subseteq \{0,1\}^d$ be a completion of $M\setminus \vec{f}$ with the bijection $\alpha:M\setminus\vec{f}\rightarrow M'_{\vec{f}}$ as the completion witness, and let $\PPP$ be any
	partition of $M'_{\vec{f}}$ into at most $k$ sets, each of
	diameter at most $r$. By the above observation, there is a set $P \in \PPP$
	with $|\alpha^{-1}(P)\cap N^\blank(\FFF'\setminus \{F\})|\geq
        2^{|T_M|}(r+t+|T_M|+2)$. Note that all the vectors in
        $N^\blank(\FFF'\setminus \{F\})$ have $\blank$ exactly at the
        indices in $T_M$. Hence there
        is a set $P'\subseteq P$ of size $r+t+|T_M|+2$ which was
        completed the same way, \eg, $\vec{a}[i]=\vec{b}[i]$ for all
        $\vec{a},\vec{b}\in P'$ and all $i$ such that
        $\alpha^{-1}(\vec{a})[i]=\alpha^{-1}(\vec{b})[i]=\blank$. Let
        $\alpha(\vec{f})$ be the completion of $\vec{f}$ in the same
        way as all the other vectors in $P'$. Note that, since for all
        $\vec{a}\in \alpha^{-1}(P')$ it holds that
        $\HDIST(\vec{a},\vec{v}) = t$ and all the vectors in
        $\alpha^{-1}(P')$ are completed the same way, it follows that
        there is an integer $t'$ with $t-|T_M|\le t'\le t+|T_M|$ such
        that, for every vector $\vec{a}\in \alpha^{-1}(P')$, we have
        $\HDIST(\vec{a},\alpha(\vec{v}))=t$. Moreover, $\FFF_{P'}:=\SB
        \HSET(\alpha(\vec{v}),\vec{x}) \SM \vec{x} \in P' \SE$ is a
        sunflower with $|\FFF_{P'}|\ge r+t'+2$.

	Let $N'$ be the
	set $(P'\cap \alpha(N^\blank(\FFF'\setminus \{F\})) \cup \{\alpha(\vec{f})\}$.
	Then $t'$, $r$,
	$\alpha(\vec{v})$, $N'$, $N'$
	satisfy the conditions of Lemma~\ref{lem:CLUS-SF-BASIC}. By
	observing that $\HDIST(N',\vec{p})\leq r$ for every $\vec{p}\in P$,
	we obtain that $\HDIST(\alpha(\vec{f}),\vec{p}) \leq \max_{x \in N'\setminus
		\{\alpha(\vec{f})\}}\HDIST(\vec{x},\vec{p})$ for every $\vec{p} \in
	P$. Hence $P\cup \{\alpha(\vec{f})\}$ has diameter at most $r$, which
	implies that the partition obtained from $\PPP$ after adding $\alpha(\vec{f})$
	to $P$ is a partition of $M'_{\vec{f}}\cup \alpha(\vec{f})$, which is a completion of $M$, into at most $k$ clusters, each of diameter at most $r$.
\end{proof}
}

\lv{\begin{lemma}}
\sv{\begin{lemma}}
\label{lem:DIAM-CLUS-SF-COM-NEIGH}
	Let $k, r \in \Nat$, and $M \subseteq \{0,1,\blank\}^d$.  
	Then there is a subset $M'$ of $M$ with $R_M \subseteq M'$ satisfying:
	\lv{
	\begin{enumerate}[leftmargin=*,widest=(P1)]
        \item[(P1)] For every $\vec{v} \in M\setminus R_M$ it holds that
          $|\HN{\vec{v}}{r}\cap (M'\setminus R_M)|\leq
          2^{|T_M|}(\sum_{t=1}^{r+|T_M|}t!(2^{|T_M|}k(r+t+|T_M|)+2)^{t})+1$; and
        \item[(P2)] $M$ has a completion with a
          partition into at most $k$ clusters of diameter at most $r$
          if and only if $M'$ does.
	\end{enumerate}
	}
	\sv{\textbf{\textup{(P1)}} For every $\vec{v} \in M\setminus R_M$ it holds that
          $|\HN{\vec{v}}{r}\cap (M'\setminus R_M)|\leq
          2^{|T_M|}(\sum_{t=1}^{r+|T_M|}t!(2^{|T_M|}k(r+t+|T_M|)+2)^{t})+1$; and
        \textbf{\textup{(P2)}} $M$ has a completion with a
          partition into at most $k$ clusters of diameter at most $r$
          if and only if $M'$ does.	}Moreover, $M'$ can be computed in time polynomial in $M$.
\end{lemma}
\lv{
\begin{proof}
	Initially, we set $M'$ to $M$.
	Then
	for every $\vec{v} \in M\setminus R_M$ and every $t$ with $1 \leq t \leq r+|T_M|$, we
	apply Lemma~\ref{lem:UN-CLUS-SF-NEIGH-DIAM} to $\vec{v}$ and $M'$ exhaustively, \ie,
	as long as $|N|=|N_{=t}(M')|\geq 2^{|T_M|}t!(2^{|T_M|}k(r+t+|T_M|+2))^{t}+1$, we use the lemma
	to find the vector $\vec{f}$, remove it from $M'$ and apply the
	lemma again. Let $M'$ be the subset of $M$ obtained in this manner.
	Then $R_M \subseteq M'$ and (P1) clearly holds. Moreover,
        (P2) follows from Lemma~\ref{lem:UN-CLUS-SF-NEIGH-DIAM}.
\end{proof}
}

We can now prove that \PAIRCLUSq{} is \FPT\ w.r.t.\ the three parameters.
\lv{\begin{theorem}}
\sv{\begin{theorem}}
  \PAIRCLUSq{} is \FPT\ parameterized by $k+r+\comb(M)$.
\end{theorem}
\lv{
\begin{proof}
  Let $(M,k,r)$ be the given instance of \PAIRCLUSq{} and let $M'$ be
  the set obtained using Lemma~\ref{lem:DIAM-CLUS-SF-COM-NEIGH}. Because
  $M'$ satisfies (P2), it holds that $(M,k,r)$ and $(M',k,r)$ are
  equivalent instances of \PAIRCLUSq{}. Moreover, because $M'$
  satisfies (P1), we obtain that every cluster of diameter at most $r$
  can contain at most $(2^{|T_M|}(\sum_{t=1}^{r+|T_M|}t!(2^{|T_M|}k(r+t+|T_M|)+1)^{t})+|R_M|+1)$ vectors,
  which implies that if $|M'|>	k(2^{|T_M|}(\sum_{t=1}^{r+|T_M|}t!(2^{|T_M|}k(r+t+|T_M|)+1)^{t})+|R_M|+1)$, we
  can safely return that $(M,k,r)$ is a \NO\hy instance. Consequently,
  $|M'|\leq k((\sum_{t=1}^rt!(k(r+t+2))^{t})+1)$
  and
  it remains to reduce the number of coordinates for each vector in
  $M'$. Let $D'$ be the set of coordinates obtained from
  Lemma~\ref{lem:impot-coor-equiv} for $M'$. Then $(M',k,r)$ and
  $(M_{D'}',k,r)$ are equivalent instances of $\PAIRCLUSq{}$ and
  moreover we obtain from Lemma~\ref{lem:IN-Q-DIAM-BOUND} that
  $\MDIAMq(M')\leq 2rk-r+|T_M|$.
  Therefore, we obtain:

  \[\begin{array}{ccc}
      |D'| & \leq & (k\MDIAMq(M')+|R_M|(|M'|-1))(r'+1)\\
           & \leq & (k(2rk-r+|T_M|)+|R_M|(|M'|-1))(r+1)
      \end{array}\]
  showing that the size of $D'$ is bounded by our parameter
  $k+r+\comb(M)$. Hence $(M_{D'}',k,r)$ is a kernel for
  $(M,k,r)$ and $\INCLUSq{}$ is fixed-parameter
  tractable parameterized by $k+r+\comb(M)$.
\end{proof}
}
\sv{
\section{Lower-Bound Results}
\label{sec:lbs}
This section is dedicated to showing that the parameterizations used in the algorithms presented up to this point are necessary to achieve (fixed-parameter) tractability. We do so by providing a number of hardness reductions.

It is known that \ANYCLUS{} is \NP-complete for $r=2$~\cite{JiaoXuLi04}.
 Our first two
hardness results show that the other two clustering problems also
\NP-complete even for fixed values of~$r$.
The results utilize reductions from the {\sc Dominating Set} problem on 3-regular
  graphs~\cite{kikuno1980np,gj} and the problem of
  partitioning a $K_4$-free 4-regular graph into
  triangles~\cite{triangles}, respectively.
%
\lv{\begin{theorem}}
\sv{\begin{theorem}}
\label{thm:NPh-r}
  \INCLUS{} is \NP-complete for $r=4$,
and \PAIRCLUS{} is \NP-complete for $r=6$.
\end{theorem}

Having ruled out fixed-parameter tractability when parameterizing only by $r$, we turn to the case where the parameter is $k$ alone. First of all, for $k=1$ \ANYCLUS{} is equivalent to \textsc{Closest String}, a well-studied \NP-complete problem~\cite{GrammNiedermeierRossmanith03}.
\lv{Below we show that \PAIRCLUS{} is also \NP-complete even when restricted to a fixed value of $k$. }\sv{Using a two-step reduction from \textsc{3-Coloring} on 4-regular graphs~\cite{3coloring}, we show that \PAIRCLUS{} is also \NP-complete, even when restricted to a fixed value of $k$.}
\lv{\begin{theorem}}
\sv{\begin{theorem}}
\label{thm:paranprpairclusr}
\PAIRCLUS{} is \NP-complete for \hbox{$k=3$}.
\end{theorem}

Unlike the previous two problems, \INCLUS{}
admits a simple polynomial-time brute-force algorithm for every fixed
value of $k$ where the order of the polynomial depends on
$k$ (\ie, the problem is in \XP). However, we can still exclude fixed-parameter tractability:

\begin{theorem}
\label{the:in-clus-k}
  \INCLUS{} is \Weft\emph{[2]}-complete parameterized by $k$ and can be solved in time $\bigoh(|M|^k|M|kd)$. Moreover, there is no algorithm solving \INCLUS{} in time $|M|^{o(k)}$ unless the Exponential Time Hypothesis fails.
\end{theorem}

The above results already show that out of the three
considered parameters, $k$ and $r$ must both be used if one wishes to
obtain fixed-parameter algorithms for the clustering problems under consideration. In the
case of clustering of incomplete data, the only two questions that
remain are whether one also needs to use the covering number, and whether it is possible to extend the polynomial-time algorithm for \INCLUS\ to \INCLUSq.
We resolve these last questions
using reductions from \probfont{$3$-Coloring} and \probfont{Closest String}.

\begin{theorem}
  \label{thm:quest-NPh-kar}
  For $X\in \{\probfont{In,Any,Diam}\}$,  \CLUSq{$X$} is  \NP-complete even if $k=3$
  and $r=0$. Furthermore, \INCLUSq{} is \NP-complete even if $k=1$
  and there is only one row containing $\blank$\hy entries.
\end{theorem}
}
\lv{
\section{Lower-Bound Results}
\label{sec:hardness}
We dedicate this section to showing that the parameterizations used in the presented \FPT{} algorithms presented in Section~\ref{sec:fpt-incomplete}, are necessary to achieve tractability. Obviously, lower-bound results for clustering problems for complete data carry over to their counterparts for incomplete data. Therefore, we will omit restating these results for the incomplete data case.

\subsection{Lower-Bound Results for Complete Data}
\label{subsec:lowerboundcomplete}
It is known that \ANYCLUS{} is \NP-complete for $r=2$~(see Section 3
of previous work by Jiao \etal~(\citeyear{JiaoXuLi04})). Our first
hardness results show that the other two clustering problems are also
\NP-complete for constant values of $r$. \sv{The proof is based on two
  reductions: one from the {\sc Dominating Set} problem on 3-regular
  graphs~\cite{kikuno1980np,gj}, and the other from the problem of
  partitioning a $K_4$-free 4-regular graph into
  triangles~\cite{triangles}.
}

\lv{\begin{theorem}}
\sv{\begin{theorem}}
\label{thm:NPh-r}
  \INCLUS{} is \NP-complete for $r=4$,
and \PAIRCLUS{} is \NP-complete for $r=6$.
\end{theorem}

%
\lv{\begin{proof}
For \INCLUS{}, we give a polynomial-time reduction from the {\sc Dominating Set} problem on 3-regular graphs (3-DS), which is \NP-complete~\cite{kikuno1980np,gj}, to the restriction of \INCLUS{} to instances where $r\leq 4$.
Given an instance $(G, k)$ of 3-DS, where $V(G)=\{v_1, \ldots, v_n\}$, set $x_i=0$ for $i \in [n]$, and apply the reduction ${\cal R}$ (described in Subsection~\ref{subsec:reduction}) to $G$ to obtain the set of vectors $M$.  By Observation~\ref{obs:genericreduction}, for any two vertices $v_i, v_j \in V(G)$, where $i \neq j$, we have $\HDIST(\vec{a_i},\vec{a_j}) =6$ if $v_iv_j \notin E(G)$ and $\HDIST(\vec{a_i},\vec{a_j}) =4$ if $v_i$ and $v_j$ are adjacent. The reduction from 3-DS to \INCLUS{} produces the instance $(M, k, 4)$ of \INCLUS{}.

It is easy to see that if $D$ is dominating set of $G$ of size $k$, then we can cluster $M$ into $k$ clusters, each containing a vector $\vec{a_i}$ corresponding to a vertex $v_i \in D$ and vectors corresponding to neighbors of $v_i$ in $G$; if a vertex $v_j$ in $G$ has multiple neighbors in $D$, then pick a neighbor $v_i \in D$ of $v_j$ arbitrarily, and place $\vec{a_j}$ in the cluster containing $\vec{a_i}$. Since each cluster $C$ contains a vector corresponding to a vertex $v_i$ in $D$, and all other vectors in $C$ correspond to neighbors of $v_i$ in $G$, the distance between any vector in $C$ and $\vec{a_i}$ is at most 4. This shows that $(M, k, 4)$ is a \yes-instance of \INCLUS{}. Conversely, if $(M, k, 4)$ is a \yes-instance of \INCLUS{}, let $C_1, \ldots, C_k$ be a partitioning of $M$ into $k$ clusters, each of radius at most 4, and let $\vec{a_{j_1}}, \ldots, \vec{a_{j_k}}$ be their centers, respectively. Consider the set of vertices $D=\{v_{j_1}, \ldots, v_{j_k}\}$. For a vertex $v_i \in V(G)$, its vector $\vec{a_i}$ belongs to a cluster $C_p$, $p\in [k]$, and hence its distance from $\vec{a_{j_p}}$ is at most 4. This implies that either $v_i=v_{j_p}$, or $v_i$ is adjacent to $v_{j_p}$ in $D$. It follows that $(G, k)$ is a \yes-instance of 3-DS.

 \medskip

The proof for \PAIRCLUS{} uses similar ideas, but the starting point of the reduction is different. Here, we reduce from the problem of determining whether a $K_4$-free 4-regular graph can be partitioned into triangles; we will simply refer to this problem as $\triangle$-{\sc Partition}.

We begin by arguing the \NP-hardness of $\triangle$-{\sc Partition}. First, Theorem~10 in~\cite{triangles} establishes the \NP-hardness of determining whether a graph of maximum degree $4$ can be partitioned into triangles. By Lemma~3 in~\cite{triangles}, this problem then admits a polynomial-time reduction to determining whether a $4$-regular graph can be partitioned into triangles. Finally, if a $4$-regular graph contains a $K_4$, then in any partitioning of the graph into triangles, the vertices of the $K_4$ belong to two triangles, $T_1, T_2$, such that $T_1$ consists of an edge $e_1$ of the $K_4$ and a vertex $v_1$ not in the $K_4$, and $T_2$ consists of an edge $e_2$ of the $k_4$ (such that $e_1$ and $e_2$ share no endpoints) and a vertex $v_2$ not in the $K_4$. Notice that in such case $T_1$ and $T_2$ are unique and can be determined and removed by pre-processing the instance. The above combined implies the \NP-hardness of our $\triangle$-{\sc Partition} problem.

Now given an instance $G$ of $\triangle$-{\sc Partition}, where $V(G)=\{v_1, \ldots, v_n\}$, we again set $x_i=0$ for $i \in [n]$, and apply the reduction ${\cal R}$ to $G$ to obtain the set of vectors $M$. The polynomial-time reduction from $\triangle$-{\sc Partition} to  \PAIRCLUS{}  produces the instance $(M, n/3,6)$ of \PAIRCLUS{}. By Observation~\ref{obs:genericreduction}, since $G$ is 4-regular, we have: for any two distinct vertices $v_i, v_j \in V(G)$, $\HDIST(\vec{a_i},\vec{a_j}) =8$ if $v_i$ and $v_j$ are nonadjacent and $\HDIST(\vec{a_i},\vec{a_j}) =6$ if $v_i$ and $v_j$ are adjacent.

If $G$ can be partitioned into $n/3$ triangles, then the three vectors in $M$ corresponding to each triangle form a cluster of diameter 6, and hence $(M, n/3,6)$ is a \yes-instance of \PAIRCLUS{}. Conversely, if
$(M, n/3,6)$ is a \yes-instance of \PAIRCLUS, then since $G$ is $K_4$-free, no cluster can contain more than three vectors. Since $M$ can be partitioned into $n/3$ clusters, it follows that each cluster contains exactly three vectors. The three vertices in $G$ corresponding to the three vectors in any of the $n/3$ clusters are pairwise adjacent, and hence form a triangle in $G$. It follows that $G$ can be partitioned into $n/3$ triangles, and $G$ is a \yes-instance of $\triangle$-{\sc Partition}.
\end{proof}
}

Having ruled out fixed-parameter tractability when parameterizing only by $r$, we turn to the case where the parameter is $k$ alone. First of all, for $k=1$ \ANYCLUS{} is equivalent to \textsc{Closest String}, a well-studied \NP-complete problem~\cite{GrammNiedermeierRossmanith03}.
\lv{Below we show that \PAIRCLUS{} is also \NP-complete even when restricted to a fixed value of $k$. }\sv{Using a two-step reduction from \textsc{3-Coloring} on 4-regular graphs~\cite{3coloring}, we show that \PAIRCLUS{} is also \NP-complete, even when restricted to a fixed value of $k$.}
\lv{\begin{theorem}}
\sv{\begin{theorem}}
\label{thm:paranprpairclusr}
\PAIRCLUS{} is \NP-complete for \hbox{$k=3$}.
\end{theorem}

\lv{
\begin{proof}
Consider the problem of deciding whether a graph on $n$ vertices can be partitioned into three cliques, referred to as {\sc 3-Clique Partitioning} henceforth. This problem is \NP-hard via a trivial reduction (that complements the edges of the graph) from the \NP-hard problem~\cite{3coloring} {\sc 3-Coloring}. We can now reduce {\sc 3-Clique Partitioning} to the restriction of  \PAIRCLUS{} to instances in which the number of desired clusters, $k$, is 3, using the generic construction given in Subsection~\ref{subsec:reduction}.

 Given an instance $G$ of {\sc 3-Clique Partitioning}, where $V(G)=\{v_1, \ldots, v_n\}$, we set $x_i=n-1-deg(v_i)$ for $i \in [n]$, and apply the polynomial-time reduction ${\cal R}$ to $G$ to produce the set of vectors $M$.
 The polynomial-time reduction from {\sc 3-Clique Partitioning} produces the instance $(M, 3, 2n-4)$ of \PAIRCLUS{}. By Observation~\ref{obs:genericreduction}, for any two distinct vertices $v_i, v_j \in V(G)$, $\HDIST(\vec{a_i},\vec{a_j}) =2n-2$ if $v_i$ and $v_j$ are nonadjacent and $\HDIST(\vec{a_i},\vec{a_j}) =2n-4$ if $v_i$ and $v_j$ are adjacent.

If $G$ can be partitioned into 3 cliques, then the vectors in $M$ corresponding to the vertices in each clique form a cluster of diameter $2n-4$, and hence $(M, 3, 2n-4)$ is a \yes-instance of \PAIRCLUS{}. Conversely, if
$(M, 3, 2n-4)$ is a \yes-instance of \PAIRCLUS{}, $M$ can be partitioned into $3$ clusters, each of diameter at most $2n-4$. The vertices in $G$ corresponding to the vectors in each of the 3 clusters are pairwise adjacent, and hence form a clique in $G$. It follows that $G$ can be partitioned into 3 cliques, and $G$ is a \yes-instance of {\sc 3-Clique Partitioning}.
\end{proof}
}

Finally, we note that, unlike the previous two problems, \INCLUS{}
admits a simple polynomial-time brute-force algorithm for every fixed
value of $k$ where the order of the polynomial depends on
$k$. However, one can still exclude fixed-parameter tractability
via a reduction from \textsc{Dominating Set}.

\sv{
\begin{theorem}]
\label{the:in-clus-k}
  \INCLUS{} is \Weft\emph{[2]}-complete parameterized by $k$ and can be solved in time $\bigoh(|M|^k|M|kd)$. Moreover, there is no algorithm solving \INCLUS{} in time $|M|^{o(k)}$ unless the Exponential Time Hypothesis fails.
\end{theorem}
%
}
\lv{
\begin{observation}
\label{obs:inclusterxp}
  \INCLUS{} can be solved in time $\bigoh(|M|^k|M|kd)$.
\end{observation}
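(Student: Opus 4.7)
The plan is to establish this via a straightforward brute-force enumeration algorithm, exploiting the fact that for \INCLUS{} the cluster centers must themselves lie in $M$. Specifically, I would enumerate every subset $S \subseteq M$ with $|S|\leq k$ as a candidate set of centers, and for each such $S$ verify whether $\HDIST(S,\vec{m})\leq r$ holds for every $\vec{m}\in M$. The instance is a \YES-instance if and only if at least one candidate $S$ passes the verification, since by definition of \INCLUS{} the centers are required to be vectors of~$M$.

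For the running time analysis, the number of subsets of $M$ of size at most $k$ is $\sum_{i=0}^{k}\binom{|M|}{i}=\bigoh(|M|^k)$, which can be enumerated in time $\bigoh(|M|^k)$ using standard techniques. For each such candidate $S$, the verification step requires iterating over all $|M|$ vectors of $M$, and for each such vector computing its Hamming distance to each of the at most $k$ vectors in $S$; since each such distance computation takes $\bigoh(d)$ time, the verification for a single candidate costs $\bigoh(|M|\cdot k\cdot d)$. Multiplying the two factors yields a total running time of $\bigoh(|M|^k\cdot |M|\cdot k\cdot d)$, matching the stated bound.

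There is essentially no obstacle here; the only subtlety worth noting is the justification that restricting the search to subsets of $M$ (rather than subsets of $\{0,1\}^d$) is correct, which is immediate from the problem definition of \INCLUS{}, where the center vectors of each cluster are required to belong to the completed input set itself. Consequently, the observation follows directly, and no kernelization, structural lemma, or further argument is needed.
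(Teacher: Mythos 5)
Your proposal is correct and matches the paper's own argument: both enumerate the $\bigoh(|M|^k)$ candidate center sets drawn from $M$ (valid since \INCLUS{} requires centers to lie in $M$) and verify each in $\bigoh(|M|kd)$ time by checking that every vector is within distance $r$ of some center. The paper merely phrases the verification as assigning each vector to its closest enumerated center, which is equivalent to your distance check.
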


\begin{proof}
The result follows using a brute-force algorithm that enumerates each subset of $k$ vectors in $M$ as the potential centers of the $k$ clusters sought. For each such subset $S \subseteq M$ of $k$ vectors, the algorithm iterates through the vectors in $M$, placing each vector $\vec{a} \in M$ into the cluster containing the vector in $S$ whose distance to $\vec{a}$ is minimum and is at most $r$; if no vector in $S$ has distance at most $r$ to $\vec{a}$, the enumeration is discarded, as it does not lead to a solution. If the algorithm manages to place each vector $\vec{a} \in M$ into a cluster containing a vector in $S$ whose distance to $\vec{a}$ is at most $r$, the algorithm accepts. Enumerating all subsets of $k$ vectors in $M$ takes time $\bigoh(|M|^k)$. Iterating through each vector in $M$, and finding its closest vector in the enumerated $k$-subset of $M$, takes time $\bigoh(|M|kd)$. The theorem follows.
\end{proof}

\begin{theorem}\label{the:in-clus-k}
  \INCLUS{} is \Weft\emph{[2]}-complete parameterized by $k$. Moreover, there is no algorithm solving \INCLUS{} in time $|M|^{o(k)}$ unless the Exponential Time Hypothesis fails.
\end{theorem}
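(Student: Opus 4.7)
The plan is to establish the three claims of the theorem separately. The $\bigoh(|M|^k|M|kd)$ running time is handled by the preceding observation via brute-force enumeration of $k$-subsets of $M$ as candidate center sets, so only the $\W{2}$-completeness and the ETH lower bound remain. My approach hinges on the observation that \INCLUS{} is essentially \textsc{Dominating Set} on the naturally associated compatibility graph.

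For $\W{2}$-membership, I would note that an instance $\inst=(M,k,r)$ of \INCLUS{} reduces trivially to \textsc{Dominating Set} on the compatibility graph $\compG(\inst)$: a set $S\subseteq M$ of size at most $k$ witnesses a \YES\hy instance of \INCLUS{} if and only if $S$ is a dominating set in $\compG(\inst)$, since every $\vec{a}\in M$ lies within distance $r$ of some $\vec{s}\in S$ iff $\vec{a}\in S$ or $\vec{a}$ is adjacent to a vertex of $S$ in $\compG(\inst)$. As \textsc{Dominating Set} parameterized by solution size is in $\W{2}$, so is \INCLUS{}.

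For $\W{2}$-hardness together with the ETH lower bound I would exhibit a parameter-preserving polynomial reduction from \textsc{Dominating Set} on general graphs using the generic construction ${\cal R}$ from Subsection~\ref{subsec:reduction}. Given a graph $G$ on $n$ vertices, I would set $x_i:=(n-1)-\deg(v_i)$ for every $v_i\in V(G)$, so that by Observation~\ref{obs:genericreduction} every $\vec{a}_i$ contains exactly $n-1$ ones, and any two distinct vectors satisfy $\HDIST(\vec{a}_i,\vec{a}_j)=2n-2$ when $v_iv_j\notin E(G)$ and $\HDIST(\vec{a}_i,\vec{a}_j)=2n-4$ when $v_iv_j\in E(G)$. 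Choosing $r:=2n-4$ then makes the compatibility graph of $M$ coincide with $G$, so the instances $(G,k)$ of \textsc{Dominating Set} and $(M,k,2n-4)$ of \INCLUS{} are equivalent by the membership argument above. Since this reduction runs in polynomial time, produces $|M|=n$ vectors, and preserves $k$, we inherit $\W{2}$-hardness of \INCLUS{} from $\W{2}$-hardness of \textsc{Dominating Set}; moreover, the known ETH lower bound ruling out $n^{o(k)}$-time algorithms for \textsc{Dominating Set} immediately rules out any $|M|^{o(k)}$-time algorithm for \INCLUS{}.

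The reductions in both directions align naturally with the compatibility-graph viewpoint, so the main technical care is only in verifying that $x_i:=(n-1)-\deg(v_i)$ is a valid choice (i.e., lies in $\{0,\dots,n-1\}$) and in checking the arithmetic in Observation~\ref{obs:genericreduction}; I do not anticipate any substantial obstacle beyond these routine verifications.
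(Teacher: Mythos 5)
Your proposal is correct and follows essentially the same route as the paper: $\W{2}$-hardness and the ETH bound via the generic construction ${\cal R}$ applied to \textsc{Dominating Set} with $x_i=n-1-\deg(v_i)$ and $r=2n-4$, and $\W{2}$-membership via the correspondence between solutions of \INCLUS{} and dominating sets of the compatibility graph. Your explicit observation that the compatibility graph of the constructed instance coincides with $G$ is just a slightly more streamlined phrasing of the equivalence argument the paper borrows from its Theorem~\ref{thm:NPh-r}.
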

\begin{proof}
We prove the statement by giving a reduction from {\sc Dominating Set} (DS), which is \W{2}-hard and cannot be solved in subexponential time unless the Exponential Time Hypothesis fails~\cite{DowneyFellows13}, to \INCLUS{} parameterized by $k$.
The reduction is very similar to that in the proof of Theorem~\ref{thm:NPh-r}, albeit that its starting point is {\sc Dominating Set} (on general graphs) rather than 3-DS. Given an instance $(G, k)$ of DS, where $V(G)=\{v_1, \ldots, v_n\}$, we set $x_i=n-1-deg(v_i)$ for $i \in [n]$, and apply the polynomial-time reduction ${\cal R}$ to $G$ to produce the set of vectors $M$.  The reduction from DS to \INCLUS{} produces the instance $\inst=(M, k, 2n-4)$ of \INCLUS{}.

By Observation~\ref{obs:genericreduction}, for any two distinct vertices $v_i, v_j \in V(G)$, $\HDIST(\vec{a_i},\vec{a_j}) =2n-2$ if $v_i$ and $v_j$ are nonadjacent and $\HDIST(\vec{a_i},\vec{a_j}) =2n-4$ if $v_i$ and $v_j$ are adjacent.
The proof that $(G, k)$ is a \yes-instance of DS iff $(M, k, 2n-4)$ is a \yes-instance of \INCLUS{} now follows by similar arguments to those in the proof of the same statement in Theorem~\ref{thm:NPh-r}.

Finally, membership in \W{2} can be shown via a reduction from \INCLUS{} to \textsc{DS} that constructs the compatibility graph $\compG(\inst)$ of the given instance $\inst$ in polynomial time, and uses the observation that there is a direct correspondence between a dominating set in $G$ of size $k$ and a solution for $\inst$.
\end{proof}
}

\subsection{Lower-Bound Results for Incomplete Data}
\label{subsec:lowerboundincomplete}
The earlier results in this section already show that out of the three
considered parameters, $k$ and $r$ must both be used if one wishes to
obtain fixed-parameter algorithms for the clustering problems under consideration. In the
case of clustering of incomplete data, the only two questions that
remain are whether one also needs to use the covering number\lv{
$\comb(M)$}, and whether it is possible to extend the polynomial-time
algorithm for \INCLUS\ to \INCLUSq.
\sv{We resolve these last remaining questions
using reductions from \probfont{$3$-Coloring} and \probfont{Closest String}.}
\lv{We resolve these questions below.}
\sv{\begin{theorem}
  \label{thm:quest-NPh-kar}
  For $X\in \{\probfont{In,Any,Diam}\}$,  \CLUSq{$X$} is  \NP-complete even if $k=3$
  and $r=0$. Furthermore, \INCLUSq{} is \NP-complete even if $k=1$
  and there is only one row containing $\blank$\hy entries.
\end{theorem}
}
\begin{theorem}
  \label{thm:quest-NPh-kar}
  \INCLUSq{}, \ANYCLUSq{}, \PAIRCLUSq{} are \NP-complete even if $k=3$
  and $r=0$.
\end{theorem}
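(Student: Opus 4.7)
The plan is to give a single polynomial-time reduction from \textsc{3-Coloring}, which is \NP\hy hard~\cite{3coloring}, to each of the three problems restricted to $k=3$ and $r=0$. The key observation is that when $r=0$, all three variants collapse to the same combinatorial question: does the input matrix admit a completion in which at most $k$ distinct rows appear? Indeed, a cluster of radius $0$ around a center is just the singleton $\{\text{center}\}$, regardless of whether that center is required to lie in $M^*$ (\INCLUSq{}) or in $\{0,1\}^d$ (\ANYCLUSq{}), and a cluster of diameter $0$ consists of pairwise identical vectors (\PAIRCLUSq{}). So in all three cases a \YES{}\hy certificate is a partition of $M$ into at most $k$ classes, each of which can be completed to one and the same Boolean vector.

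Given an input graph $G=(V,E)$ with $V=\{v_1,\dots,v_n\}$ and a fixed ordering $e_1,\dots,e_m$ of $E$, I construct $n$ vectors $\vec{a}_1,\dots,\vec{a}_n$ in $\{0,1,\blank\}^m$ as follows: for every edge $e_j=\{v_p,v_q\}$ with $p<q$, set $\vec{a}_p[j]=0$, $\vec{a}_q[j]=1$, and $\vec{a}_i[j]=\blank$ for all $i\notin\{p,q\}$. The output instance is $(M,3,0)$ with $M=\{\vec{a}_1,\dots,\vec{a}_n\}$; the construction is clearly polynomial.

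The heart of the argument is the following claim: two vectors $\vec{a}_p$ and $\vec{a}_q$ can be completed to a common Boolean vector if and only if $\{v_p,v_q\}\notin E$. If $\{v_p,v_q\}\in E$, then on the coordinate corresponding to that edge the two vectors have known, distinct values, so no common completion exists. Conversely, if the edge is absent then on every coordinate at least one of $\vec{a}_p,\vec{a}_q$ is $\blank$, so the two are pairwise compatible. Over the Boolean domain, pairwise compatibility of a family implies joint completability (set each coordinate of the common completion to the unique known value in that column, or to $0$ if none exists), so a partition of $M$ into three jointly completable classes corresponds exactly to a partition of $V$ into three independent sets, i.e., a proper $3$\hy coloring of $G$. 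Combined with the first paragraph, this yields the desired equivalence for all three variants.

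Membership in \NP{} is immediate: a completion of $M$ together with a partition into three classes is a polynomial\hy size certificate, and both the radius/diameter condition and the bound $k=3$ can be verified in polynomial time. I do not anticipate any real obstacle here; the two points that need explicit care are (i) arguing that $r=0$ really collapses all three variants to the ``at most $k$ distinct completed rows'' condition, and (ii) the ``pairwise compatible $\Leftrightarrow$ jointly completable'' step, which is specific to the Boolean domain but is short and elementary.
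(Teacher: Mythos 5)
Your reduction is exactly the one the paper uses: one coordinate per edge, with the lower-indexed endpoint getting $0$, the higher-indexed endpoint getting $1$, and all other rows getting $\blank$, yielding the instance $(M,3,0)$. The only difference is that you spell out the two steps the paper dismisses as ``straightforward to verify'' (the collapse of all three variants at $r=0$ to ``at most $k$ distinct completed rows'' and the pairwise-compatible $\Rightarrow$ jointly-completable argument), and both are argued correctly.
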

\begin{proof}
  We give a polynomial-time reduction from \probfont{$3$-Coloring} as
  follows. Let $G$ be the given instance of \probfont{$3$-Coloring}
  with edges $e_1,\dotsc,e_{|E(G)|}$
  and let $M$ be the set of vectors containing a vector $\vec{v} \in
  \{0,1,\blank\}^{|E(G)|}$ for every $v \in V(G)$ such that
  $\vec{v}[i]=\blank$ if $e_i$ is not incident with $v$, $\vec{v}[i]=0$ if $e_i=\{u,v\}$
  is incident with $v$ and $v<u$, and $\vec{v}[i]=1$ otherwise; here we
  assume an arbitrary but fixed ordering $<$ of the vertices of $G$.
  It is now straightforward to verify that $G$ has a $3$-coloring if
  and only if the vectors in $M$ can be partitioned into three sets
  such that $\HDISTq(x,y)=0$ for every $x,y \in M$
  contained in the same set, which in turn is true if and only if $(M,3,0)$
  is a \YES\hy instance of \INCLUSq{}, \ANYCLUSq{}, or \PAIRCLUSq{}.
\end{proof}

 \begin{theorem}
   \label{thm:quest-NPh-kac}
   \INCLUSq{} is \NP-complete even if $k=1$
   and there is only one row containing $\blank$\hy entries.
 \end{theorem}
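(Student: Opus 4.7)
The plan is to reduce from \probfont{Closest String} over the Boolean alphabet, which is well known to be \NP-complete~\cite{GrammNiedermeierRossmanith03} and is precisely the $k=1$ restriction of \ANYCLUS{}. Recall that an instance of \probfont{Closest String} consists of a (multi)set $N=\{\vec{v}_1,\dotsc,\vec{v}_n\}\subseteq \{0,1\}^d$ together with an integer $r$, and asks whether there exists $\vec{s}\in\{0,1\}^d$ such that $\HDIST(\vec{s},\vec{v}_i)\leq r$ for every $i\in[n]$. Membership of the restricted \INCLUSq{} in \NP{} is immediate: one can guess the completion of the single $\blank$\hy row and, since $k=1$, verify in polynomial time by checking that some vector in the completed set is within distance $r$ from all others.

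For \NP\hy hardness, given an instance $(N,r)$ of \probfont{Closest String}, produce the instance $(M,1,r)$ of \INCLUSq{}, where $M:=N\cup\{\vec{w}\}$ and $\vec{w}\in\{\blank\}^d$ is a single new row all of whose entries are $\blank$. Clearly only one row of $M$ contains $\blank$\hy entries, and the construction is polynomial. The key observation is that, since $k=1$, any completion $M^*$ of $M$ that admits a valid clustering has a distinguished center $\vec{s}\in M^*$ within distance $r$ from every vector in $M^*$.

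For the forward direction, if $\vec{s}\in\{0,1\}^d$ is a solution to the \probfont{Closest String} instance, complete $\vec{w}$ to $\vec{s}$; then every $\vec{v}_i$ is within distance $r$ from $\vec{s}\in M^*$ by assumption, and $\vec{s}$ itself is trivially within distance $0$ from itself. Thus $(M,1,r)$ is a \YES\hy instance. For the backward direction, suppose $(M,1,r)$ is a \YES\hy instance with completion $M^*$ and center $\vec{s}\in M^*$. If $\vec{s}$ is the completion of $\vec{w}$, then $\vec{s}$ witnesses a solution for the \probfont{Closest String} instance directly. Otherwise $\vec{s}=\vec{v}_j$ for some $j\in[n]$; but then $\HDIST(\vec{v}_j,\vec{v}_i)\leq r$ for all $i\in[n]$, so $\vec{v}_j$ itself serves as a closest string. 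In either case $(N,r)$ is a \YES\hy instance of \probfont{Closest String}, completing the reduction.

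No serious obstacle is anticipated here: the main point is simply that a single all\hy $\blank$ row is precisely the gadget that converts the ``center must be an input vector'' constraint of \INCLUSq{} into the ``center may be arbitrary'' setting of \probfont{Closest String}, while respecting the restriction that only one row contains $\blank$\hy entries.
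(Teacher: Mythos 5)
Your proposal is correct and follows essentially the same route as the paper: a reduction from \probfont{Closest String} over the Boolean alphabet in which a single all-$\blank$ row is added, which can be completed to the closest string and thereby serves as the single in-cluster center. Your treatment of the backward direction (explicitly handling the case where the center is one of the original vectors) just spells out what the paper states more tersely, so there is no substantive difference.
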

 \begin{proof}
   We give a polynomial-time reduction from \probfont{Closest String},
   which is well-known to be \NP-hard even for binary alphabets~\cite{litman}. Let $(S,r)$ with $S=(s_1,\dotsc,s_n)$ and $s_i \in
   \{0,1\}^L$ for every $i \in [n]$ be the given instance of
   \probfont{Closest String}. Then the set $M$ of vectors contains one
   vector $s_i$ for every $i \in [n]$ and additionally the vector
   $\vec{q}=\{\blank\}^L$. It is easy to observe that for
   every \YES-instance of $(M,1,r)$ there exists a solution which completes $\vec{q}$ to the closest string of $(S,r)$, and hence
   $(S,r)$ is a \YES\hy instance of \probfont{Closest String} if and
   only if $(M,1,r)$ is a \YES{}\hy instance of \INCLUSq{}.
 \end{proof}

}

\section{Going Beyond Boolean Domain}
\label{sec:boundeddomain}

In this section, we briefly discuss two generalizations of the clustering problems under consideration that allow for larger domain size, where each generalization is based on a different way of measuring distance between vectors in higher domains. In particular, we discuss the Hamming distance and the Manhattan distance over a domain $Q=\{0,1,\dots,q-1,\blank\}$, for some $q\geq 2$.
\lv{

\pbDef{\textsc{HAM-IN-Clustering-Completion}$_q$}{A subset $M$ of $\{0,1,\dots,q-1,\blank\}^d$ and $k, r \in \Nat$.}{Is there a completion $M^*$ of $M$ and subset $S \subseteq M$ with $|S|\leq k$ such that $\HDIST(S,\vec{a})\leq r$ for every $\vec{a} \in M$?}
 
\pbDef{\textsc{MAN-IN-Clustering-Completion}$_q$}{A subset $M$ of $\{0,1,\dots,q-1,\blank\}^d$ and $k, r \in \Nat$.}{Is there a subset $S \subseteq M$ with $|S|\leq k$ such that for every $\vec{a} \in M$ there exists $\vec{s}\in S$ such that $\sum_{t=1}^d |a[t]-s[t]|$ is at most $r$?}

The generalizations of the other problems to higher domains w.r.t.\
the Hamming and Manhattan distance, respectively, are defined
analogously. Observe that for $q=2$, the problems we obtain are
precisely those we introduced in Section~\ref{section:prelims}.

}
Our aim in this section is to extend our results from matrices over the Boolean domain to these generalizations, and the
main tools we use are two encodings of domain values.
We define the two encodings $\alpha : [q]\cup\{\blank\} \rightarrow
\{0,1,\blank\}^q$ and $\beta :[q]\cup \{\blank\} \rightarrow \{0,1,\blank\}^q$, where $\alpha(i)$
is the binary encoding of $2^i$ and $\beta(i)$ is the unary encoding
of $i$ if $i\neq \blank$ and $\alpha(i)=\beta(i)=\blank^q$, otherwise. Moreover, for $\vec{v} \in \{0,1\}^d$, we let
$\alpha(\vec{v})$ and $\beta(\vec{v})$ be the vectors in $\{0,1\}^{qd}$
obtained from $\vec{v}$ by replacing each coordinate $i\in
[d]$ with a \emph{block} of $q$ coordinates equal to $\alpha(i)$
and $\beta(i)$, respectively.
For example, if $Q=\{0,1,2,\blank\}$ and $d=2$, then
$\alpha((0,2))=(0,0,1,1,0,0)$ and $\beta((0,2))=(0,0,0,0,1,1)$.

It is easy to verify that there is a direct correspondence between the
vector distances in a matrix $M$ over $Q^d$ and the Hamming vector distances in the
matrix over $\{0,1,\blank\}^{qd}$ obtained by applying the respective
encoding function to $M$.
\begin{observation}
  \label{obs:trans}
  For each $\vec{a},\vec{b} \in Q^d$ it holds that
  $\HDIST(\vec{a},\vec{b})\cdot 2=\HDIST(\alpha(\vec{a}),\alpha(\vec{b}))$ and that $\sum_{t=1}^d
  |a[t]-b[t]|=\HDIST(\beta(\vec{a}),\beta(\vec{b}))$.
\end{observation}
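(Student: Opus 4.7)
The plan is to prove each of the two identities in Observation~\ref{obs:trans} by a straightforward block-by-block analysis of the Hamming distance on the encoded vectors. Recall that the encodings $\alpha$ and $\beta$ replace each coordinate $i \in [d]$ in the original vector with a \emph{block} of $q$ coordinates in the encoded vector, so the total Hamming distance on the encoded vectors decomposes as a sum of contributions from the $d$ blocks. It will therefore suffice to analyze a single block $t \in [d]$ and show that the contribution of that block equals the correct per-coordinate quantity, namely $2 \cdot \mathbf{1}[a[t] \neq b[t]]$ for $\alpha$ and $|a[t] - b[t]|$ for $\beta$.

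For the $\alpha$-encoding, I would argue as follows: since $\alpha(i)$ is the binary encoding of $2^i$, it is a length-$q$ vector with exactly one $1$ (at the position corresponding to $i$) and $0$s everywhere else. Consequently, for any two values $a[t], b[t] \in [q]$, the blocks $\alpha(a[t])$ and $\alpha(b[t])$ either coincide (if $a[t] = b[t]$), contributing $0$ to the Hamming distance, or differ in exactly two positions (the two distinct positions of the lone $1$s), contributing $2$. Summing over all $d$ blocks yields $\HDIST(\alpha(\vec{a}),\alpha(\vec{b})) = 2 \cdot |\{t \in [d] : a[t] \neq b[t]\}| = 2 \cdot \HDIST(\vec{a},\vec{b})$, as claimed.

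For the $\beta$-encoding, I would use that $\beta(i)$ is the unary encoding of $i$, i.e., a length-$q$ vector whose first $i$ entries are $1$ and whose remaining $q - i$ entries are $0$ (or any fixed convention of this type). Then for two values $a[t], b[t] \in [q]$, the blocks $\beta(a[t])$ and $\beta(b[t])$ agree on all positions that are simultaneously within or outside the $1$-prefix of both, and disagree precisely on those positions that fall into the $1$-prefix of one but not the other. The number of such positions is exactly $|a[t] - b[t]|$. Summing over $t \in [d]$ gives $\HDIST(\beta(\vec{a}),\beta(\vec{b})) = \sum_{t=1}^{d} |a[t] - b[t]|$.

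There is no real obstacle here; the only care needed is to pin down the exact convention used for the encodings $\alpha$ and $\beta$ (e.g., whether $\alpha(i)$ has its unique $1$ at position $i$ or position $i+1$, and likewise for the $1$-prefix of $\beta(i)$), but the argument is insensitive to such choices since both identities rely only on the per-block Hamming contributions, which are invariant under any consistent positional convention. The statement for $\vec{a}, \vec{b} \in Q^d$ (allowing $\blank$) is not in the scope of this observation since both identities are stated for $\vec{a}, \vec{b} \in Q^d$ interpreted as fully specified vectors.
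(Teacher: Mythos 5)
Your block-by-block verification is correct and matches how the paper treats this statement: the paper states the observation without an explicit proof, regarding exactly this per-block computation (two differing positions per changed coordinate for $\alpha$, and $|a[t]-b[t]|$ differing positions for the unary encoding $\beta$) as immediate. One small remark: since $Q$ contains $\blank$ and $\alpha(\blank)=\beta(\blank)=\blank^q$, the case of missing entries is not really out of scope but is handled trivially — such a block contributes $0$ to both sides because $\HDIST$ only counts coordinates where both entries are known.
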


\lv{
For each $i\in [d]$, we will call the set of coordinates $\{(i-1)\cdot
q+1,(i-1)\cdot q+2,\dots, (i-1)\cdot q+q$ a \emph{block}. }
Consider a
matrix $M$ obtained by applying $\alpha$ (or $\beta$) to a matrix
$M'$. A completion $M^*$ of $M$ is \emph{block-preserving}
w.r.t.~$\alpha$ (respectively $\beta$) if for each
vector $\vec{v}\in M^*$ the $i$-th block of $\vec{v}$ is equal to
$\alpha(i)$ (respectively $\beta(i)$) for some $i \in Q$.
Equivalently, $M^*$ is block-preserving w.r.t. $\alpha$ (or $\beta$) if it can be
obtained by applying $\alpha$ (or $\beta$, respectively) to the
elements of some completion of the matrix $M'$.

For $\textsc{Prob}\in \IADCLUSq$, let $\textsc{Prob}_\alpha$ and $\textsc{Prob}_\beta$ be
the adaptation of $\textsc{Prob}$ to the case where we additionally
require the completion $M^*$ of $M$ to be block-preserving
(w.r.t. $\alpha$ or $\beta$).
Since both encodings only increase the dimension of the vectors by a
constant factor, Observation~\ref{obs:trans} allows us to reduce the
completion problems over $Q$ to the question of finding
block-preserving completions of Boolean matrices. It is
easy to argue that all the developed algorithmic techniques can be
extended to the block-preserving variants of the problems.
\lv{
For
instance, finding and removing irrelevant vectors is not affected by
blocks. Additionally, finding and removing irrelevant coordinates is safe as
long as one always treats all coordinates of a block in the
same manner. Moreover, when we need to consider a completion of
certain $\blank$-entries, we will only consider the completions that are
block-preserving.

\begin{corollary}
  Let $\textsc{Prob}\in
  \IADCLUSq{}$ with parameterization $\iota\subseteq \{k,r,\comb\}$:
  \begin{itemize}
  \item If $\textsc{Prob}_\alpha$ is \FPT{} (or \XP) parameterized by $\iota$, then so is $\textsc{HAM-Prob}_q$.
  \item If $\textsc{Prob}_\beta$ is \FPT (or \XP) parameterized by $\iota$, then so is $\textsc{MAN-Prob}_q$.
  \end{itemize}
\end{corollary}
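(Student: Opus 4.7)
The plan is to establish both bullets by a single, uniform polynomial-time many-one reduction, parameterized in a way that is friendly to all three parameters $k$, $r$, and $\comb$. Given an instance $(M,k,r)$ of $\textsc{HAM-Prob}_q$ (respectively $\textsc{MAN-Prob}_q$) with $M\subseteq Q^d$, I would first produce the Boolean instance $(M',k,r')$, where $M'$ is obtained by applying $\alpha$ (respectively $\beta$) row-wise to $M$, and where $r'=2r$ in the Hamming case and $r'=r$ in the Manhattan case. Formally, $M'\subseteq \{0,1,\blank\}^{qd}$ has one row $\alpha(\vec{v})$ (respectively $\beta(\vec{v})$) for each $\vec{v}\in M$. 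This transformation is clearly computable in time polynomial in the size of the input, since $q$ is fixed and the per-row blow-up is only a factor $q$ in dimension.

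Next I would verify correctness, which is where the notion of a block-preserving completion does all the work. By construction, every block-preserving completion of $M'$ w.r.t.\ $\alpha$ (respectively $\beta$) is exactly of the form $\alpha(M^*)$ (respectively $\beta(M^*)$) for some completion $M^*$ of $M$ over $Q^d$, and conversely every completion $M^*$ of $M$ yields such a block-preserving completion. Thus the completions of $M$ are in bijection with the block-preserving completions of $M'$. Observation~\ref{obs:trans} then gives the exact distance correspondence: a distance $\le r$ in the Hamming metric over $Q$ becomes a Hamming distance $\le 2r$ in the Boolean encoding (using $\alpha$), while Manhattan distance $\le r$ over $Q$ becomes Hamming distance $\le r$ in the Boolean encoding (using $\beta$). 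Hence $(M,k,r)$ is a \YES-instance of $\textsc{HAM-Prob}_q$ (respectively $\textsc{MAN-Prob}_q$) if and only if $(M',k,r')$ is a \YES-instance of $\textsc{Prob}_\alpha$ (respectively $\textsc{Prob}_\beta$).

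Next I would track how each parameter in $\iota$ transforms. The parameter $k$ is unchanged. The radius/diameter parameter $r$ becomes $r'\in\{r,2r\}$, so $r'$ is bounded by a computable function of $r$. For $\comb$, if $T_M\subseteq[d]$ and $R_M\subseteq M$ witness $\comb(M)$, then taking the rows $R_{M'}$ corresponding to $R_M$ together with the $q|T_M|$ columns forming the blocks indexed by $T_M$ covers every $\blank$ in $M'$; hence $\comb(M')\le q\cdot\comb(M)$. Therefore, for any $\iota\subseteq\{k,r,\comb\}$, the parameter value of the reduced instance is at most a computable function of the original parameter value, and an \FPT{} (resp.\ \XP) algorithm for $\textsc{Prob}_\alpha$ or $\textsc{Prob}_\beta$ applied to $(M',k,r')$ yields an \FPT{} (resp.\ \XP) algorithm for the original problem.

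The only subtle step is the parameter bookkeeping for $\comb$, since this is the only parameter that genuinely depends on the encoding blow-up; but because $q$ is a fixed constant of the problem, the linear bound $\comb(M')\le q\cdot\comb(M)$ suffices. Once this is observed, everything else reduces to invoking the hypothesized algorithm in a black-box manner, and the corollary follows.
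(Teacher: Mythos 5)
Your reduction is essentially the route the paper takes: apply $\alpha$ (resp.\ $\beta$) row-wise, invoke Observation~\ref{obs:trans} to rescale the radius/diameter ($r'=2r$ for Hamming, $r'=r$ for Manhattan), identify completions of $M$ with block-preserving completions of the encoded matrix, and note that $k$ is unchanged, $r'$ is a computable function of $r$, and $\comb$ grows by at most a factor of $q$; your explicit bookkeeping $\comb(M')\leq q\cdot\comb(M)$ is a welcome addition to what the paper only asserts informally.

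There is, however, one step that does not follow from the distance correspondence alone: the backward direction of your claimed equivalence for the \textsc{Any} variant under the $\alpha$-encoding. In \ANYCLUSq{} the centers are arbitrary vectors of $\{0,1\}^{qd}$, and the definition of $\textsc{Prob}_\alpha$ constrains only the \emph{completion} to be block-preserving, not the centers. A non-block center can cover encoded vectors within $2r$ even when no center over $Q$ covers the originals within $r$: take $q=3$, $d=2$, and the complete vectors $(0,0),(1,1),(2,2)$. No vector of $Q^2$ is within Hamming distance $1$ of all three (a center $(x,y)$ would have to agree with each of $0,1,2$ in some coordinate), yet the all-zero vector of $\{0,1\}^{6}$ is within Hamming distance $2$ of each of the three $\alpha$-encodings, so $(\alpha(M),1,2)$ is a \YES-instance while $(M,1,1)$ is a \NO-instance. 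To close this gap you must either take $\textsc{Prob}_\alpha$ to restrict centers to block-structured vectors (i.e., valid encodings of vectors in $Q^d$), which is the reading the paper implicitly relies on when it says all coordinates of a block are treated in the same manner, or argue that centers may be assumed block-structured. The latter argument does work for $\beta$/Manhattan (replace each block of a center by the unary vector with the same number of ones; this never increases the Hamming distance to any unary-encoded vector), but it fails for $\alpha$/Hamming, as the example shows. For the \textsc{In} and \textsc{Diam} variants, where centers and distances are taken among the (block-preservingly completed) matrix vectors themselves, your argument is complete as written.
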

In other words, all our
\FPT{}-results and \XP{}-results also carry over to the finite
domain case.
}
\sv{
As a corollary, we obtain that all our
\FPT{}-results and \XP{}-results also carry over to the finite
domain case.
}

\section{Conclusion}
\label{sec:concl}

We provided a systematic study of the parameterized complexity of
fundamental clustering problems for incomplete data.
Our results draw a detailed map of the complexity landscape for the studied problems
and showcase a sharp contrast between the settings that are
fixed-parameter tractable and those which are not.

Finally, we believe that the insights and
techniques showcased in this paper are of general interest. Indeed, in
essence they show that vectors over a bounded domain which are packed
in dense clusters have non-trivial combinatorial properties that only
become accessible through a suitable set representation. We hope that
these insights and techniques turn out to be useful in other
settings as well.

\section{Acknowledgements}

Robert Ganian acknowledges support by the Austrian Science Fund (FWF,
projects P31336 and Y 1329).
Stefan Szeider acknowledges support by the Austrian Science Fund (FWF,
project P32441)
and
the Vienna Science and Technology Fund (WWTF, project ICT19-065).
Sebastian Ordyniak acknowledges support from the
Engineering and Physical Sciences Research Council (EPSRC, project EP/V00252X/1).

\bibliography{literature}

\begin{thebibliography}{51}
\providecommand{\natexlab}[1]{#1}
\providecommand{\url}[1]{\texttt{#1}}
\providecommand{\urlprefix}{URL }
\expandafter\ifx\csname urlstyle\endcsname\relax
  \providecommand{\doi}[1]{doi:\discretionary{}{}{}#1}\else
  \providecommand{\doi}{doi:\discretionary{}{}{}\begingroup
  \urlstyle{rm}\Url}\fi

\bibitem[{Aggarwal and Reddy(2013)}]{clusteringbook1}
Aggarwal, C.~C.; and Reddy, C.~K. 2013.
\newblock \emph{Data Clustering: Algorithms and Applications}.
\newblock Chapman \& Hall/CRC, 1st edition.

\bibitem[{Balzano et~al.(2012)Balzano, Szlam, Recht, and Nowak}]{balzano}
Balzano, L.; Szlam, A.; Recht, B.; and Nowak, R.~D. 2012.
\newblock $K$-subspaces with missing data.
\newblock \emph{2012 IEEE Statistical Signal Processing Workshop (SSP)}
  612--615.

\bibitem[{Betancourt et~al.(2016)Betancourt, Zanella, Miller, Wallach, Zaidi,
  and Steorts}]{BetancourtZMWZS16}
Betancourt, B.; Zanella, G.; Miller, J.~W.; Wallach, H.~M.; Zaidi, A.; and
  Steorts, B. 2016.
\newblock Flexible Models for Microclustering with Application to Entity
  Resolution.
\newblock In Lee, D.~D.; Sugiyama, M.; von Luxburg, U.; Guyon, I.; and Garnett,
  R., eds., \emph{Advances in Neural Information Processing Systems 29: Annual
  Conference on Neural Information Processing Systems 2016, December 5-10,
  2016, Barcelona, Spain}, 1417--1425.

\bibitem[{Boucher and Ma(2011)}]{boucherma}
Boucher, C.; and Ma, B. 2011.
\newblock Closest String with Outliers.
\newblock \emph{{BMC} Bioinformatics} 12({S-1}): S55.

\bibitem[{Cabello et~al.(2011)Cabello, Giannopoulos, Knauer, Marx, and
  Rote}]{cabello}
Cabello, S.; Giannopoulos, P.; Knauer, C.; Marx, D.; and Rote, G. 2011.
\newblock Geometric clustering: Fixed-parameter tractability and lower bounds
  with respect to the dimension.
\newblock \emph{{ACM} Trans. Algorithms} 7(4): 43:1--43:27.

\bibitem[{Cand{\`{e}}s and Plan(2010)}]{cp10}
Cand{\`{e}}s, E.~J.; and Plan, Y. 2010.
\newblock Matrix Completion With Noise.
\newblock \emph{Proceedings of the {IEEE}} 98(6): 925--936.

\bibitem[{Cand{\`{e}}s and Recht(2009)}]{cr09}
Cand{\`{e}}s, E.~J.; and Recht, B. 2009.
\newblock Exact Matrix Completion via Convex Optimization.
\newblock \emph{Foundations of Computational Mathematics} 9(6): 717--772.

\bibitem[{Cand{\`{e}}s and Tao(2010)}]{ct10}
Cand{\`{e}}s, E.~J.; and Tao, T. 2010.
\newblock The power of convex relaxation: near-optimal matrix completion.
\newblock \emph{{IEEE} Trans. Information Theory} 56(5): 2053--2080.

\bibitem[{Charikar and Panigrahy(2004)}]{charikar}
Charikar, M.; and Panigrahy, R. 2004.
\newblock Clustering to minimize the sum of cluster diameters.
\newblock \emph{Journal of Computer and System Sciences} 68(2): 417 -- 441.

\bibitem[{Chen, Hermelin, and Sorge(2019)}]{ChenHS19}
Chen, J.; Hermelin, D.; and Sorge, M. 2019.
\newblock On Computing Centroids According to the p-Norms of Hamming Distance
  Vectors.
\newblock In Bender, M.~A.; Svensson, O.; and Herman, G., eds., \emph{27th
  Annual European Symposium on Algorithms, {ESA} 2019, September 9-11, 2019,
  Munich/Garching, Germany}, volume 144 of \emph{LIPIcs}, 28:1--28:16. Schloss
  Dagstuhl - Leibniz-Zentrum f{\"{u}}r Informatik.

\bibitem[{Cohen{-}Addad, Kanade, and Mallmann{-}Trenn(2018)}]{AddadKM18}
Cohen{-}Addad, V.; Kanade, V.; and Mallmann{-}Trenn, F. 2018.
\newblock Clustering Redemption-Beyond the Impossibility of Kleinberg's Axioms.
\newblock In Bengio, S.; Wallach, H.~M.; Larochelle, H.; Grauman, K.;
  Cesa{-}Bianchi, N.; and Garnett, R., eds., \emph{Advances in Neural
  Information Processing Systems 31: Annual Conference on Neural Information
  Processing Systems 2018, NeurIPS 2018, 3-8 December 2018, Montr{\'{e}}al,
  Canada.}, 8526--8535.

\bibitem[{Cygan et~al.(2015)Cygan, Fomin, Kowalik, Lokshtanov, Marx, Pilipczuk,
  Pilipczuk, and Saurabh}]{CyganFKLMPPS15}
Cygan, M.; Fomin, F.~V.; Kowalik, L.; Lokshtanov, D.; Marx, D.; Pilipczuk, M.;
  Pilipczuk, M.; and Saurabh, S. 2015.
\newblock \emph{Parameterized Algorithms}.
\newblock Springer.

\bibitem[{Dailey(1980)}]{3coloring}
Dailey, D.~P. 1980.
\newblock Uniqueness of colorability and colorability of planar 4-regular
  graphs are {NP}-complete.
\newblock \emph{Discrete Mathematics} 30(3): 289 -- 293.

\bibitem[{Downey and Fellows(2013)}]{DowneyFellows13}
Downey, R.~G.; and Fellows, M.~R. 2013.
\newblock \emph{Fundamentals of Parameterized Complexity}.
\newblock Texts in Computer Science. Springer.
\newblock ISBN 978-1-4471-5558-4, 978-1-4471-5559-1.

\bibitem[{Dyer and Frieze(1985)}]{frieze}
Dyer, M.; and Frieze, A. 1985.
\newblock A Simple Heuristic for the $p$-centre Problem.
\newblock \emph{Oper. Res. Lett.} 3(6): 285--288.

\bibitem[{Elhamifar(2016)}]{nips2016-incompleteclustering}
Elhamifar, E. 2016.
\newblock High-Rank Matrix Completion and Clustering under Self-Expressive
  Models.
\newblock In Lee, D.~D.; Sugiyama, M.; Luxburg, U.~V.; Guyon, I.; and Garnett,
  R., eds., \emph{Advances in Neural Information Processing Systems 29},
  73--81. Curran Associates, Inc.

\bibitem[{Elhamifar and Vidal(2013)}]{ev13}
Elhamifar, E.; and Vidal, R. 2013.
\newblock Sparse Subspace Clustering: Algorithm, Theory, and Applications.
\newblock \emph{{IEEE} Trans. Pattern Anal. Mach. Intell.} 35(11): 2765--2781.

\bibitem[{Erd{\"o}s and Rado(1960)}]{Erdos60}
Erd{\"o}s, P.; and Rado, R. 1960.
\newblock Intersection theorems for systems of sets.
\newblock \emph{Journal of the London Mathematical Society} 1(1): 85--90.

\bibitem[{Feder and Greene(1988)}]{normclustering}
Feder, T.; and Greene, D. 1988.
\newblock Optimal Algorithms for Approximate Clustering.
\newblock In \emph{Proceedings of the Twentieth Annual ACM Symposium on Theory
  of Computing}, STOC '88, 434--444. ACM.

\bibitem[{Flum and Grohe(2006)}]{FlumGrohe06}
Flum, J.; and Grohe, M. 2006.
\newblock \emph{Parameterized Complexity Theory}, volume XIV of \emph{Texts in
  Theoretical Computer Science. An EATCS Series}.
\newblock Berlin: Springer.

\bibitem[{Fomin et~al.(2020)Fomin, Golovach, Lokshtanov, Panolan, and
  Saurabh}]{abs-1807-07156}
Fomin, F.~V.; Golovach, P.~A.; Lokshtanov, D.; Panolan, F.; and Saurabh, S.
  2020.
\newblock Approximation Schemes for Low-rank Binary Matrix Approximation
  Problems.
\newblock \emph{{ACM} Trans. Algorithms} 16(1): 12:1--12:39.

\bibitem[{Fomin, Golovach, and Panolan(2020)}]{FominGP18}
Fomin, F.~V.; Golovach, P.~A.; and Panolan, F. 2020.
\newblock Parameterized low-rank binary matrix approximation.
\newblock \emph{Data Min. Knowl. Discov.} 34(2): 478--532.
\newblock \doi{10.1007/s10618-019-00669-5}.
\newblock \urlprefix\url{https://doi.org/10.1007/s10618-019-00669-5}.

\bibitem[{Fomin, Golovach, and Simonov(2019)}]{FominGS19}
Fomin, F.~V.; Golovach, P.~A.; and Simonov, K. 2019.
\newblock Parameterized k-Clustering: Tractability Island.
\newblock In Chattopadhyay, A.; and Gastin, P., eds., \emph{39th {IARCS} Annual
  Conference on Foundations of Software Technology and Theoretical Computer
  Science, {FSTTCS} 2019, December 11-13, 2019, Bombay, India}, volume 150 of
  \emph{LIPIcs}, 14:1--14:15. Schloss Dagstuhl - Leibniz-Zentrum f{\"{u}}r
  Informatik.

\bibitem[{Fomin et~al.(2019)Fomin, Lokshtanov, Saurabh, and
  Zehavi}]{FominLokshtanovSaurabhZehavi19}
Fomin, F.~V.; Lokshtanov, D.; Saurabh, S.; and Zehavi, M. 2019.
\newblock \emph{Kernelization: Theory of Parameterized Preprocessing}.
\newblock Cambridge University Press.
\newblock \doi{10.1017/9781107415157}.

\bibitem[{Frances and Litman(1997)}]{litman}
Frances, M.; and Litman, A. 1997.
\newblock On covering problems of codes.
\newblock \emph{Theory of Computing Systems} 30(2): 113--119.

\bibitem[{Gan, Ma, and Wu(2007)}]{clusteringbook2}
Gan, G.; Ma, C.; and Wu, J. 2007.
\newblock \emph{Data clustering - theory, algorithms, and applications.}
\newblock SIAM.

\bibitem[{Ganian et~al.(2018)Ganian, Kanj, Ordyniak, and Szeider}]{icml}
Ganian, R.; Kanj, I.; Ordyniak, S.; and Szeider, S. 2018.
\newblock Parameterized Algorithms for the Matrix Completion Problem.
\newblock In \emph{{ICML}}, volume~80 of \emph{{JMLR} Workshop and Conference
  Proceedings}, 1642--1651.

\bibitem[{Garey and Johnson(1979)}]{gj}
Garey, M.; and Johnson, D. 1979.
\newblock \emph{Computers and Intractability}.
\newblock W.H. Freeman.

\bibitem[{Gaspers and Szeider(2014)}]{GaspersSzeider14}
Gaspers, S.; and Szeider, S. 2014.
\newblock Guarantees and limits of preprocessing in constraint satisfaction and
  reasoning.
\newblock \emph{Artificial Intelligence} 216: 1--19.

\bibitem[{G\c{a}sieniec, Jansson, and Lingas(1999)}]{lingashammingcenter}
G\c{a}sieniec, L.; Jansson, J.; and Lingas, A. 1999.
\newblock Efficient Approximation Algorithms for the {H}amming Center Problem.
\newblock In \emph{Proceedings of the Tenth Annual ACM-SIAM Symposium on
  Discrete Algorithms}, 905--906.

\bibitem[{G\c{a}sieniec, Jansson, and Lingas(2004)}]{lingasapxclustering}
G\c{a}sieniec, L.; Jansson, J.; and Lingas, A. 2004.
\newblock Approximation algorithms for {Hamming} clustering problems.
\newblock \emph{Journal of Discrete Algorithms} 2(2): 289 -- 301.

\bibitem[{Gonzalez(1985)}]{gonzalez}
Gonzalez, T.~F. 1985.
\newblock Clustering to minimize the maximum intercluster distance.
\newblock \emph{Theoretical Computer Science} 38: 293 -- 306.

\bibitem[{Gottlob and Szeider(2008)}]{GottlobSzeider08}
Gottlob, G.; and Szeider, S. 2008.
\newblock Fixed-parameter algorithms for artificial intelligence, constraint
  satisfaction, and database problems.
\newblock \emph{The Computer Journal} 51(3): 303--325.
\newblock Survey paper.

\bibitem[{Gramm, Niedermeier, and
  Rossmanith(2003)}]{GrammNiedermeierRossmanith03}
Gramm, J.; Niedermeier, R.; and Rossmanith, P. 2003.
\newblock Fixed-Parameter Algorithms for {CLOSEST} {STRING} and Related
  Problems.
\newblock \emph{Algorithmica} 37(1): 25--42.

\bibitem[{Hardt et~al.(2014)Hardt, Meka, Raghavendra, and Weitz}]{hmrw14}
Hardt, M.; Meka, R.; Raghavendra, P.; and Weitz, B. 2014.
\newblock Computational Limits for Matrix Completion.
\newblock In \emph{Proceedings of The 27th Conference on Learning Theory},
  volume~35 of \emph{{JMLR} Workshop and Conference Proceedings}, 703--725.
  JMLR.org.

\bibitem[{Harris et~al.(2018)Harris, Li, Srinivasan, Trinh, and
  Pensyl}]{HarrisLSTP18}
Harris, D.~G.; Li, S.; Srinivasan, A.; Trinh, K.; and Pensyl, T. 2018.
\newblock Approximation algorithms for stochastic clustering.
\newblock In Bengio, S.; Wallach, H.~M.; Larochelle, H.; Grauman, K.;
  Cesa{-}Bianchi, N.; and Garnett, R., eds., \emph{Advances in Neural
  Information Processing Systems 31: Annual Conference on Neural Information
  Processing Systems 2018, NeurIPS 2018, 3-8 December 2018, Montr{\'{e}}al,
  Canada.}, 6041--6050.

\bibitem[{Hermelin and Rozenberg(2015)}]{hermelin}
Hermelin, D.; and Rozenberg, L. 2015.
\newblock Parameterized complexity analysis for the Closest String with
  Wildcards problem.
\newblock \emph{Theoretical Computer Science} 600: 11--18.

\bibitem[{Hu et~al.(2018)Hu, Chen, Nia, Chan, and Geng}]{HuCNCG18}
Hu, S.; Chen, Z.; Nia, V.~P.; Chan, L.; and Geng, Y. 2018.
\newblock Causal Inference and Mechanism Clustering of {A} Mixture of Additive
  Noise Models.
\newblock In Bengio, S.; Wallach, H.~M.; Larochelle, H.; Grauman, K.;
  Cesa{-}Bianchi, N.; and Garnett, R., eds., \emph{Advances in Neural
  Information Processing Systems 31: Annual Conference on Neural Information
  Processing Systems 2018, NeurIPS 2018, 3-8 December 2018, Montr{\'{e}}al,
  Canada.}, 5212--5222.

\bibitem[{Jiao, Xu, and Li(2004)}]{JiaoXuLi04}
Jiao, Y.; Xu, J.; and Li, M. 2004.
\newblock On the $k$-Closest Substring and $k$-Consensus Pattern Problems.
\newblock In Sahinalp, S.~C.; Muthukrishnan, S.; and Dogrus{\"{o}}z, U., eds.,
  \emph{Combinatorial Pattern Matching, 15th Annual Symposium, {CPM} 2004,
  Istanbul,Turkey, July 5-7, 2004, Proceedings}, volume 3109 of \emph{Lecture
  Notes in Computer Science}, 130--144. Springer.

\bibitem[{Kikuno, Yoshida, and Kakuda(1980)}]{kikuno1980np}
Kikuno, T.; Yoshida, N.; and Kakuda, Y. 1980.
\newblock The NP-Completeness of the dominating set problem in cubic planer
  graphs.
\newblock \emph{IEICE TRANSACTIONS (1976-1990)} 63(6): 443--444.

\bibitem[{Koana, Froese, and
  Niedermeier(2020{\natexlab{a}})}]{niedermeierdiameter}
Koana, T.; Froese, V.; and Niedermeier, R. 2020{\natexlab{a}}.
\newblock Complexity of Combinatorial Matrix Completion With Diameter
  Constraints.
\newblock \emph{CoRR} abs/2002.05068.

\bibitem[{Koana, Froese, and
  Niedermeier(2020{\natexlab{b}})}]{niedermeierradius}
Koana, T.; Froese, V.; and Niedermeier, R. 2020{\natexlab{b}}.
\newblock Parameterized Algorithms for Matrix Completion with Radius
  Constraints.
\newblock In G{\o}rtz, I.~L.; and Weimann, O., eds., \emph{31st Annual
  Symposium on Combinatorial Pattern Matching, {CPM} 2020, June 17-19, 2020,
  Copenhagen, Denmark}, volume 161 of \emph{LIPIcs}, 20:1--20:14. Schloss
  Dagstuhl - Leibniz-Zentrum f{\"{u}}r Informatik.

\bibitem[{Kratsch, Marx, and Wahlstr{\"{o}}m(2016)}]{kmw}
Kratsch, S.; Marx, D.; and Wahlstr{\"{o}}m, M. 2016.
\newblock Parameterized Complexity and Kernelizability of Max Ones and Exact
  Ones Problems.
\newblock \emph{{TOCT}} 8(1): 1:1--1:28.

\bibitem[{Leskovec, Rajaraman, and Ullman(2014)}]{clusteringbook4}
Leskovec, J.; Rajaraman, A.; and Ullman, J.~D. 2014.
\newblock \emph{Mining of Massive Datasets}.
\newblock New York, NY, USA: Cambridge University Press, 2nd edition.

\bibitem[{Mao, Sarkar, and Chakrabarti(2018)}]{0001SC18}
Mao, X.; Sarkar, P.; and Chakrabarti, D. 2018.
\newblock Overlapping Clustering Models, and One (class) {SVM} to Bind Them
  All.
\newblock In Bengio, S.; Wallach, H.~M.; Larochelle, H.; Grauman, K.;
  Cesa{-}Bianchi, N.; and Garnett, R., eds., \emph{Advances in Neural
  Information Processing Systems 31: Annual Conference on Neural Information
  Processing Systems 2018, NeurIPS 2018, 3-8 December 2018, Montr{\'{e}}al,
  Canada.}, 2130--2140.

\bibitem[{Marx(2005)}]{marx}
Marx, D. 2005.
\newblock Parameterized complexity of constraint satisfaction problems.
\newblock \emph{Computational Complexity} 14(2): 153--183.

\bibitem[{Mirkin(2005)}]{clusteringbook3}
Mirkin, B. 2005.
\newblock \emph{Clustering For Data Mining: A Data Recovery Approach}.
\newblock Chapman \& Hall/CRC.

\bibitem[{Ryabko(2017)}]{Ryabko17}
Ryabko, D. 2017.
\newblock Independence clustering (without a matrix).
\newblock In Guyon, I.; von Luxburg, U.; Bengio, S.; Wallach, H.~M.; Fergus,
  R.; Vishwanathan, S. V.~N.; and Garnett, R., eds., \emph{Advances in Neural
  Information Processing Systems 30: Annual Conference on Neural Information
  Processing Systems 2017, 4-9 December 2017, Long Beach, CA, {USA}},
  4016--4026.

\bibitem[{van Rooij, van Kooten~Niekerk, and Bodlaender(2013)}]{triangles}
van Rooij, J. M.~M.; van Kooten~Niekerk, M.~E.; and Bodlaender, H.~L. 2013.
\newblock Partition Into Triangles on Bounded Degree Graphs.
\newblock \emph{Theory of Computing Systems} 52(4): 687--718.

\bibitem[{Yi et~al.(2012)Yi, Yang, Jin, Jain, and Mahdavi}]{icdm}
Yi, J.; Yang, T.; Jin, R.; Jain, A.~K.; and Mahdavi, M. 2012.
\newblock Robust Ensemble Clustering by Matrix Completion.
\newblock In \emph{2012 IEEE 12th International Conference on Data Mining},
  1176--1181.

\bibitem[{Yun and Prouti{\`{e}}re(2016)}]{YunP16}
Yun, S.; and Prouti{\`{e}}re, A. 2016.
\newblock Optimal Cluster Recovery in the Labeled Stochastic Block Model.
\newblock In Lee, D.~D.; Sugiyama, M.; von Luxburg, U.; Guyon, I.; and Garnett,
  R., eds., \emph{Advances in Neural Information Processing Systems 29: Annual
  Conference on Neural Information Processing Systems 2016, December 5-10,
  2016, Barcelona, Spain}, 965--973.

\end{thebibliography}

\end{document}
